\newcommand*{\circled}[1]{\lower.7ex\hbox{\tikz\draw (0pt, 0pt)%
    circle (.5em) node {\makebox[1em][c]{\small #1}};}}
\definecolor{red}{rgb}{1.00, 0.00, 0.00}  
\newtheoremstyle{examplestyle}
  {}
  {}
  {}
  {}
  {\itshape}
  {}
  {.5em}
  {\thmname{#1}\thmnumber{ #2}\thmnote{ (#3)}}
\theoremstyle{examplestyle}
\newtheorem{example}{Example}
\newcommand{\Rmnum}[1]{\expandafter\@slowromancap\romannumeral #1@}
\definecolor{SP}{RGB}{30,160,156}
\newtheorem{lemma}{Lemma}
\renewenvironment{proof}{{\itshape Proof.}}{}
\begin{document}
\normalsize
\title{What If We Had Used a Different App? Reliable Counterfactual KPI Analysis in Wireless Systems}

\author{Qiushuo Hou, Sangwoo Park, Matteo Zecchin, Yunlong Cai, Guanding Yu, and Osvaldo Simeone
\thanks{

Q. Hou, Y. Cai, and G. Yu are with the College of Information Science and Electronic Engineering, Zhejiang University, Hangzhou 310027, China (e-mail: \{qshou, ylcai, yuguanding\}@zju.edu.cn).

Sangwoo Park, Matteo Zecchin, and Osvaldo Simeone are with the King’s Communications, Learning \& Information Processing (KCLIP) lab within the Centre for Intelligent Information Processing Systems (CIIPS), Department of Engineering, King’s College London, London WC2R 2LS, U.K. (e-mail: \{sangwoo.park, matteo.1.zecchin, osvaldo.simeone\}@kcl.ac.uk). 

The work of M. Zecchin and O. Simeone is supported by the European Union’s Horizon Europe project CENTRIC (101096379). The work of O. Simeone is also supported by an Open Fellowship of the EPSRC (EP/W024101/1), and by the EPSRC project (EP/X011852/1). 
}}

\maketitle
\vspace{-1.5cm}
\begin{abstract}
In modern wireless network architectures, such as Open Radio Access Network (O-RAN), the operation of the radio access network (RAN) is managed by applications, or apps for short, deployed at intelligent controllers. These apps  are selected from a given catalog based on current contextual information. For instance, a scheduling app may be selected on the basis of current traffic and network conditions. Once an app is chosen and run, it is no longer possible to directly test the key performance indicators (KPIs) that would have been obtained with another app. In other words, we can never simultaneously observe both the actual KPI, obtained by the selected app, and the counterfactual KPI, which would have been attained with another app, for the same network condition, making individual-level counterfactual KPIs analysis particularly challenging. This what-if analysis, however, would be valuable to monitor and optimize the network operation, e.g., to identify suboptimal app selection strategies. This paper addresses the problem of estimating the values of KPIs that would have been obtained if a different app had  been implemented by the RAN. To this end, we propose a conformal-prediction-based counterfactual analysis method for wireless systems that provides reliable error bars for the estimated KPIs, despite the inherent covariate shift between logged and test data. Experimental results for medium access control-layer apps and for physical-layer apps demonstrate the merits of the proposed method.
\end{abstract}
\begin{IEEEkeywords}
Wireless systems, what-if analysis, counterfactual analysis, conformal prediction, O-RAN.
\end{IEEEkeywords}
\section{Introduction}

\subsection{Context and Motivation}

\begin{figure*}[htbp]
    \centering
    \includegraphics[width=9cm]{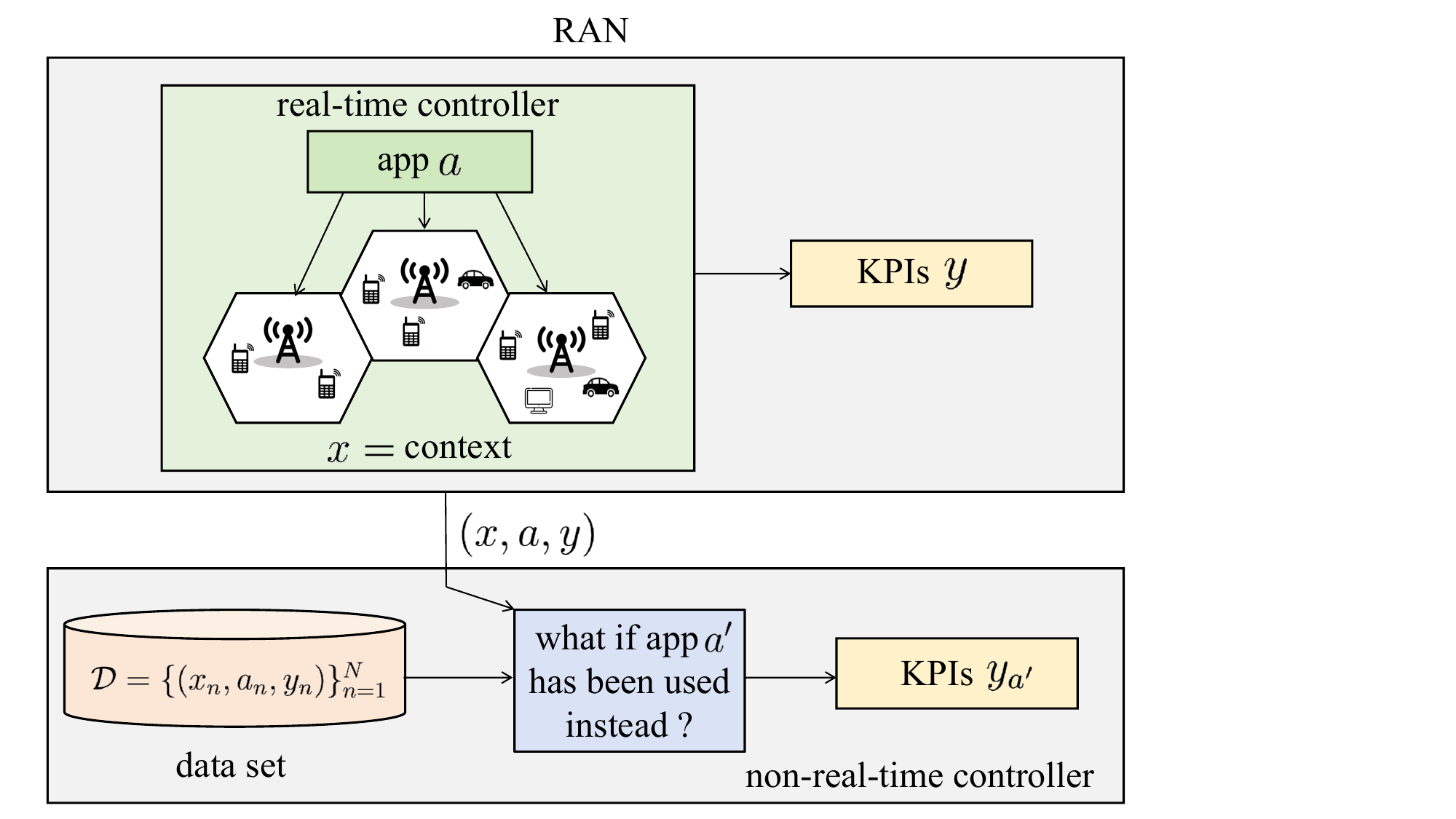}
    \vspace{-0.2cm}
    \caption{In the wireless system under study, the radio access network (RAN) is managed by a non-real-time controller. The controller collects data from the RAN about key performance indicators (KPIs) attained by apps implemented by the RAN. Accordingly, the controller logs data in the form $(x,a,y)$ as the data set, where $x$ is the context, $a$ is the app identifier, and $y$ is the KPIs. The controller implements the counterfactual analysis by answering a \emph{what-if} question: Given that app $a$ has obtained KPIs $y$ for context $x$, what would the KPIs have been for the same context $x$ had some other app $a'\neq a$ been selected by the non-real-time controller?}
    \label{fig:O-RAN}
    \vspace{-0.4cm}
\end{figure*}

As illustrated in Fig. \ref{fig:O-RAN}, in modern wireless systems such as Open Radio Access Network (O-RAN), the operation of the RAN is managed by algorithms, or apps, that are selected based on current contextual information by a non-real-time controller in the cloud  \cite{O-RAN1,O-RAN2}. For example, based on the current backlogs of a set of users and on the respective channel quality indicators, the controller may choose different scheduling apps that strike a desirable balance between throughput and fairness \cite{scheduling1, scheduling2} (see Fig. \ref{fig:system_model}).  Once an app is chosen and run, it is no longer possible to directly test the performance that would have been obtained with a different app. This what-if analysis, however, would be valuable to monitor and optimize the network operation, e.g., to identify suboptimal app selection strategies \cite{COMA, DT-counterfactual-wireless2}. Furthermore, counterfactual analyses can be instrumental in providing \emph{explanations} about the decisions made by the controller \cite{XAI-counterfactual1, XAI-counterfactual2, XAI-counterfactual3}. 

As an example, for the setting in Fig. \ref{fig:system_model}, assume that a given scheduling algorithm, such as proportional fair channel aware (PFCA) \cite{mobile}, was selected by the controller on the basis of the available contextual information given by the initial backlogs and the channel quality indicators. What would the final backlogs have been if the controller had chosen a different scheduling algorithm, such as round-robin (RR)?

Counterfactual analysis is the highest layer in the causal inference hierarchy\cite{causal in wireless}. The lowest layer, \emph{association}, focuses on correlations, while the intermediate layer, \emph{intervention}, involves actual experimentation on the environment, as in reinforcement learning. Counterfactual analysis represents the most sophisticated form of causal inference, answering “what-if” questions about past events using data, without requiring interventions in the environment.

The three layers form a strict hierarchy in which each layer subsumes the capabilities of the ones below, providing progressively more powerful tools for understanding and reasoning about causality in complex systems. Specifically, association lacks sufficient understanding of causality, i.e., cannot distinguish between correlation and causality. Intervention faces practical limitations in real-world applications. For example, in wireless networks, network-wide interventions could disrupt service quality and user experience. In contrast, counterfactual analysis supports reasoning about alternative scenarios without actual interventions, making it particularly valuable for wireless systems analysis and optimization.

This paper addresses the what-if problem of estimating the values of key performance indicators (KPIs) that would have been obtained if a different app had been implemented by the RAN. As illustrated in Fig.  \ref{fig:O-RAN},  modern networks log data about the measured KPIs for given apps and contexts for diagnostics purposes and optimization \cite{SCOPE}. Thus, the question of interest is:  At run time, given that a deployed app has produced certain KPI levels, can the logged data be leveraged to estimate the KPIs that would have been obtained if the controller had chosen a different app?

\begin{figure*}[htp]
    \centering
    \subfigure[]{
    \includegraphics[width=6.5cm]{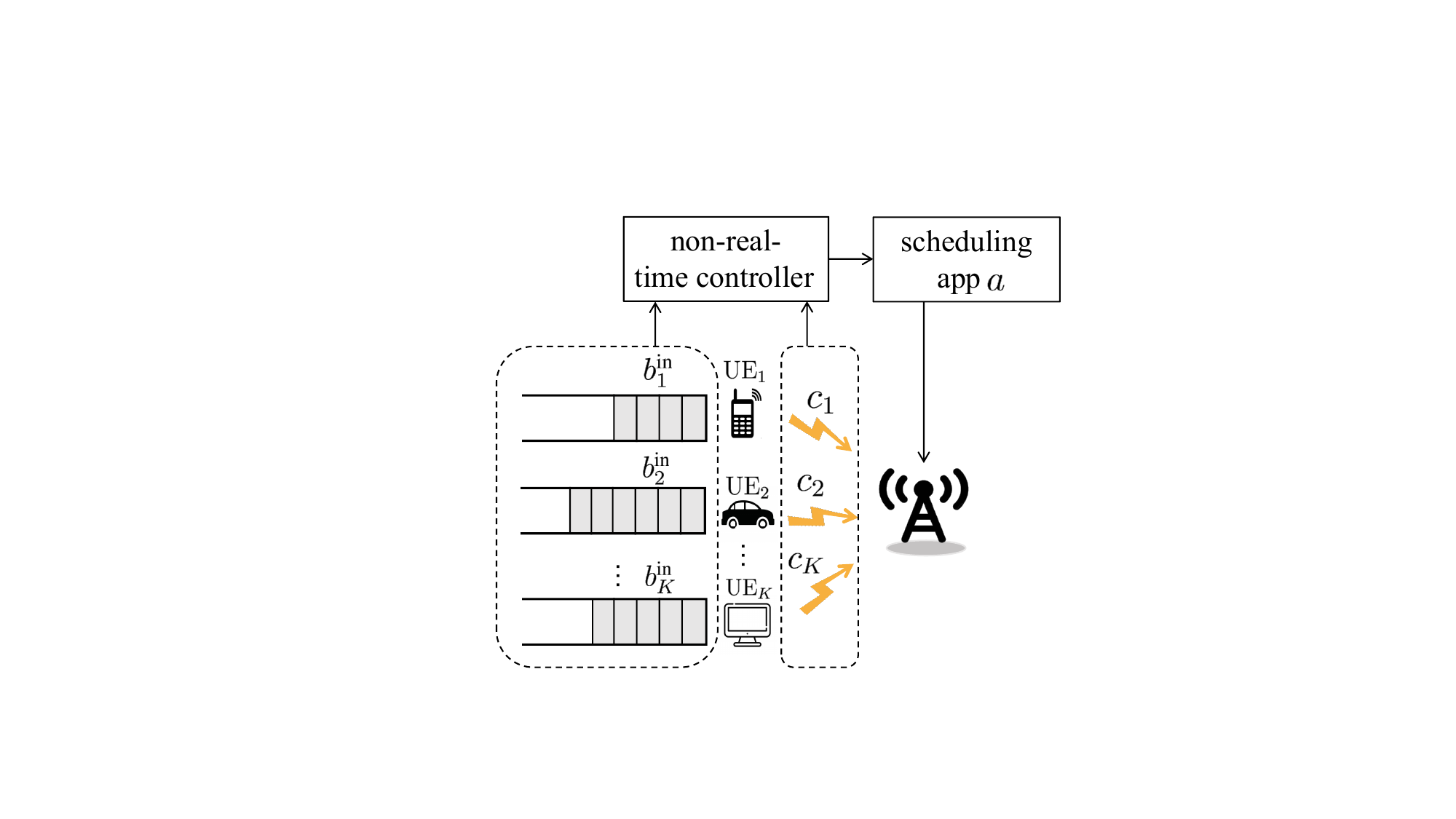}
    \vspace{-0.5cm}
    
    \label{fig:model_a}
    }
    \subfigure[]
    {
    \centering
    \includegraphics[width=7cm]{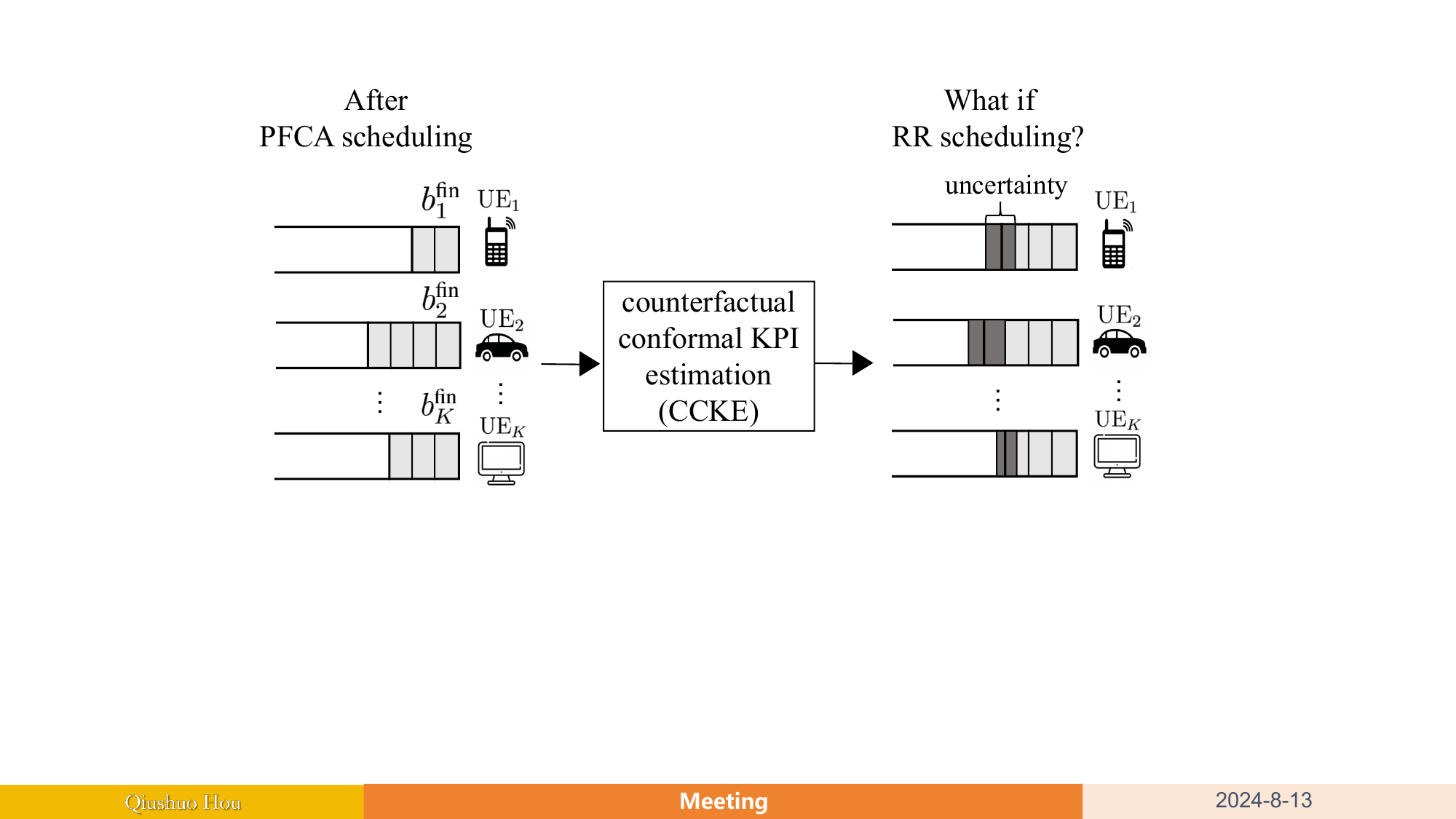}
    \label{fig:model_b}
    }
    \caption{(a) Uplink resource allocation problem in which, based on the initial backlogs and channel quality indicators (CQIs), the non-real-time controller chooses a scheduling app. (b) Given initial backlogs $b^{\mathrm{in}} = [b^{\mathrm{in}}_1,\dots,b^{\mathrm{in}}_K]$ and CQIs $c = [c_1,\dots,c_K]$ for the $K$ users, assume that the controller has selected the proportional fair channel aware (PFCA)  scheduling app, which has produced a final backlog $b^{\mathrm{fin}} = [b^{\mathrm{fin}}_1,\dots,b^{\mathrm{fin}}_K]$. What would the backlog have been if a round-robin (RR) scheduling had been selected instead?}
    \label{fig:system_model}
    \vspace{-0.5cm}
\end{figure*}

Data-driven what-if analyses are notably different from conventional prediction tasks \cite{ Wasserman book,Simeone_machine}. To design a predictor, one is typically given examples of input-output pairs that follow the same distribution of the input-output data to be predicted. In contrast, in counterfactual analysis, such data are not available. In fact, the KPI data available at the controller were recorded under contexts for which the target app was preferred by the controller to the current app. However, the testing context is one under which the target app was \emph{not} chosen by the controller, requiring a counterfactual estimate of the KPIs. This implies that there is a \emph{covariate shift} between logged data and testing data.

For the example in Fig. \ref{fig:system_model}, the data available regarding the performance of RR scheduling pertain to contexts (initial backlogs and channel quality indicators) under which RR scheduling is selected to be deployed in the RAN. However, in a counterfactual analysis, one wishes to predict the KPIs, e.g., the final backlogs, of RR under a context for which PFCA scheduling was selected. 

This problem is addressed in the literature on \emph{causal estimation}\cite{Wasserman book,causal book1,causal book2}. Causal estimation can target the \emph{average treatment effect} (ATE), the \emph{conditional average treatment effect} (CATE), or the \emph{individual treatment effect} (ITE) \cite{ITE-CP1}. The ATE measures the population-level average effect; the CATE estimates average effects for subgroups sharing similar characteristics; while the ITE estimates effects at the individual level, thus providing more granular information. In this paper, we address the ITE analysis of KPIs in wireless systems through \emph{conformal-prediction-based counterfactual inference}.

The proposed approach provides reliability guarantees by returning error bars for the KPIs to be estimated that provably contain the true KPIs with a user-defined probability. As illustrated in Fig. \ref{fig:system_model}, for the running scheduling example, the proposed method returns reliable intervals of possible values for the final backlogs.  The method builds on \emph{weighted conformal prediction} (WCP) \cite{weight-CP}, which accounts for the described {covariate shift} between logged data and testing data via a suitable weighting mechanism.

\subsection{Related Work}
\noindent \underline{\emph{Conformal prediction}}: Conformal prediction (CP) is a statistical framework \cite{CP, CP_Vladimir} that aims at calibrating the output of black-box models via  data-driven post-processing. Thanks to CP, the output of an existing predictor can be complemented with ``error bars'' that provably contain the true answer with a user-defined probability\cite{CP}. CP has been recently applied to wireless systems in \cite{CP in wireless1, CP in wireless2, CP in wireless3, NC score} to address prediction and control problems.

\noindent \underline{\emph{Reliable counterfactual analysis}}: Counterfactual analysis has been studied for years in causal analysis \cite{causal book1, causal book2}, and it generally refers to the task of answering the what-if question: What would the outcome have been had the unit been assigned to a different treatment\cite{ITE-CP1,ITE-CP2}? This is a classical task that can be viewed as an inference problem with missing data, as we can only observe one potential outcome (the actual outcome), while the others (counterfactuals) are missing\cite{ITE-CP1}. Inference with missing data has been studied for decades via statistical methods such as inverse probability weighting (IPW) \cite{causal book1} and imputation\cite{causal book1}. Both IPW and imputation-based methods aim at population-level counterfactual analysis, i.e., at the ATE or the CATE, rather than individual inferences, i.e., at the ITE. These methods are also sensitive to model misspecification, which is potentially problematic for complex systems, such as wireless networks. We refer to \cite{missing data} for further related discussions.

Recently, conformal prediction has been applied to counterfactual analysis to obtain statistical guarantees on the predicted outcomes \cite{ITE-CP1}, to perform sensitivity analysis \cite{ITE-CP2} and to obtain counterfactual estimates on the performance of off-policy reinforcement learning policies \cite{ITE-CP3}.  Conformal counterfactual analysis is an active research area in statistics with recent extensions made on \emph{multi-source} \cite{multi_source} or \emph{continuous treatment}\cite{cont_treat} scenarios. 

\noindent \underline{\emph{Counterfactual analysis in wireless systems}}: To the best of our knowledge, counterfactual analysis has been only recently mentioned as a direction for research in wireless engineering in relation to \emph{digital twinning} \cite{what-if, DT-counterfactual-wireless1}. Digital twins are simulators of real-world environments\cite{DT} that can potentially provide estimates to what-if questions such as the one posed in Fig. \ref{fig:system_model}. For example, DT platforms were leveraged in \cite{what-if,DT-AMS, DT-counterfactual-wireless1} to provide a sandbox in which different network configurations can be tested to predict their impact on KPIs. In particular, the authors of \cite{what-if} utilized a generative model for synthetic data generation at a DT to carry out a what-if analysis of KPIs including throughput, latency, packet loss, and coverage. However, the methods studied in prior work do not provide any statistical guarantee on the quality of the counterfactual estimate, which is the goal of this work.

Table \ref{table1} provides a comparison between the proposed approach and state-of-the-art counterfactual analysis methods.
\begin{table*}[htpb]
    \renewcommand\arraystretch{1.2}
	\small
	\caption{Comparison with the state of the art}
	\label{table1}
	\centering
	\scalebox{0.8}{
	\begin{tabular}{>{\color{black}}c >{\color{black}}c>{\color{black}}c>{\color{black}}c>{\color{black}}c}
		\toprule
		\textbf{Reference} & \makecell[c]{Individual-level\\ counterfactual analysis} & \makecell[c]{Finite-sample \\guarantees} & \makecell[c]{Robustness to\\ misspecification} & \makecell[c]{Application to \\wireless systems}\\
		\hline\hline
        \rowcolor{gray!30}  IPW [26] & \textcolor{black}{\ding{55}} & \textcolor{black}{\ding{55}} & \textcolor{black}{\ding{51}} & \textcolor{black}{\ding{55}}\\
        \hline
        Imputation-based methods [26] & \textcolor{black}{\ding{55}} & \textcolor{black}{\ding{55}} & \textcolor{black}{\ding{55}} & \textcolor{black}{\ding{55}}\\
        \hline
        \rowcolor{gray!30} CP-based methods [33] & \textcolor{black}{\ding{51}} & \textcolor{black}{\ding{51}} & \textcolor{black}{\ding{51}} & \textcolor{black}{\ding{55}}\\
        \hline
        \makecell[c]{Counterfactual analysis \\in wireless systems [8],[29]} & \textcolor{black}{\ding{55}}  & \textcolor{black}{\ding{55}}  & \textcolor{black}{\ding{55}} & \textcolor{black}{\ding{51}}\\
        \hline
        \rowcolor{gray!30} CCKE & \textcolor{black}{\ding{51}}  & \textcolor{black}{\ding{51}}  & \textcolor{black}{\ding{51}} & \textcolor{black}{\ding{51}}\\
		\bottomrule
	\end{tabular}} \label{table:comp_reference}
    \vspace{-0.5cm}
\end{table*}

\subsection{Main Contributions}
This paper proposes \emph{counterfactual conformal KPI estimation} (CCKE), a novel mechanism for reliable what-if KPI analysis in wireless systems based on WCP. The proposed CCKE approach provides a low-complexity post-processing solution that offers finite-sample statistical guarantees. As illustrated in Fig. \ref{fig:O-RAN}, in the system under study,  the controller collects data from the network about KPIs attained by apps selected for deployment on the basis of context information. Based on these historical data and the observation of a new set of KPIs in a test context, the controller wishes to estimate the KPIs that would have been obtained if a different app had been used.  Our main contributions are as follows.
\begin{itemize}
    \item We introduce CCKE, a statistical framework that supports the what-if analysis of KPIs in wireless systems. CCKE provides reliable error bars for the counterfactual KPI estimates, which are guaranteed to cover the true KPIs with a user-defined probability.

    \item Experimental results for the problem of uplink scheduling at the medium access control layer\cite{nokia} and for data transmission in MIMO systems at the physical layer\cite{MIMO channel model} demonstrate the effectiveness and reliability of the KPI estimates produced by CCKE.
\end{itemize}

\subsection{Organization}
The remainder of the paper is organized as follows. Section \ref{problem formulation} introduces the problem of counterfactual analysis for wireless systems. The proposed counterfactual conformal KPI estimation is presented in Section \ref{CCA}, while relevant benchmarks are described in Section \ref{baseline}. Section \ref{example2} provides an application of counterfactual analysis to medium access control-layer scheduling, and Section \ref{example1} to physical-layer transmission. Finally, Section \ref{conclusion} concludes the paper.

\section{Setting and Problem Formulation}\label{problem formulation}
In this section, we define the problem of counterfactual, or what-if, analysis for wireless systems to be studied in this paper. We summarize the main notations used throughout this paper in Table \ref{symbol_table}.

\begin{table*}[t]
    \renewcommand\arraystretch{1.2}
	\small
	\caption{Main Notations Used in This Paper}
	\label{symbol_table}
	\centering
	\scalebox{0.8}{
    
	\begin{tabular}{c|c}
		\toprule
		\textbf{Symbol} & \textbf{Definition}\\
		\hline\hline
        \makecell{$\alpha$} & miscoverage level\\
		\hline
		\makecell{$x$} & context variable\\
		\hline
		\makecell{$a$} & app identifier\\
		\hline
        \makecell{$y$} & KPIs\\
        \hline
        \makecell{$y^k$} & $k$-th element of KPIs $y$\\
        \hline
        \makecell{$y_a$} & observed KPI under context $x$ when app $a$ is selected \\
        \hline
        \makecell{$p(x)$} & probability of context $x$\\
        \hline
        \makecell{$p(a|x)$} & app selection function\\
        \hline
        \makecell{$\Gamma_{a'}(x|a)$} & prediction set for KPIs $y_{a'}$ under context $x$ and selected app $a$\\
        \hline
        \makecell{$N^{\text{tr}}$} & number of training data points\\
        \hline
        \makecell{$N^{\text{cal}}$} & number of calibration data points\\
        \hline
        \makecell{$\mathcal{D}^{\text{tr}}$} & training set\\
        \hline
        \makecell{$\mathcal{D}^{\text{cal}}$} & calibration set\\
        \hline
        \makecell{$\phi$} & parameter vector of the prediction model\\
        \hline
        \makecell{$q_\tau^k(x)$} & $\tau$-quantile of the KPI $y^k$ given context $x$\\
        \hline
        \makecell{$w_{a^{'}\rightarrow a}(x)$} & weight in WCP\\   
\bottomrule
	\end{tabular}}
	\vspace{-0.1cm}
\end{table*}

\subsection{Setting}
As illustrated in Fig. \ref{fig:O-RAN}, we study a wireless system in which the RAN is managed by a non-real-time controller. The controller logs information about the inputs used by the controller to select an app, which may be composite. It also records the identity of the selected app, and of the KPIs obtained as a result of this selection. For example, in the O-RAN architecture, inputs and KPIs can be obtained from the standardized interfaces between the RAN and the controllers \cite{intent2, O-RAN1}.

Accordingly, the controller logs data in the form 
\begin{equation}
(\underbrace{x}_{\text{context}},\underbrace{a}_{\text{apps}},\underbrace{y}_{\text{KPIs}}),
\end{equation}
where: 
\begin{itemize}
    \item The \emph{context} $x$ is a descriptor of the network operational conditions that may encompass information about the \emph{traffic}  -- e.g., number of users, the types of services being delivered, the cell average loads, fronthaul loads, and mobility levels\cite{mobile} --, \emph{connectivity} -- e.g., topology, average signal-to-noise ratio (SNR) conditions--, and \emph{multi-modal} sensory data from cameras, GPS, or radar \cite{content}. The context $x$ may also include the \emph{intents} of the network operator, possibly expressed in natural language \cite{intent1,intent2,intent3}.
    
    \item The \emph{app identifier} $a$ specifies the app being run by the network, such as scheduling algorithms, predictors, and resource allocation routines\cite{app_summary1,app_summary2,app1}. For example, in the O-RAN architecture, the operation of the RAN is managed at the time scale between $10$ ms to $1$ s by a near-real-time radio intelligence controller by plug-and-play modules known as \emph{xApps} \cite{app1}; while lower-latency applications may be run at a real-time controller on the DU and the CU by modules, known as \emph{dApps}, proposed in \cite{app2}.

    \item The \emph{KPIs} $y$ include deployment-specific performance metrics such as the latency for uRLLC traffic, the bit rate for eMBB users, and the throughput for mMTC slices\cite{KPI1,KPI2,KPI3}.
    
\end{itemize}

The goal of this work is to develop tools for the controller to answer the following \emph{what-if} question: \emph{Given that app $a$ has obtained KPIs $y$ for context $x$, what would the KPIs have been for the same context $x$ had some other app $a'\neq a$ been selected by the non-real-time controller?} Being able to address this counterfactual question would enable the controller to run diagnostic tests on its past performance, while planning for possible future policy updates.

\subsection{Problem Definition}\label{problem_definition_section}
To formalize the what-if questions of interest, assume that the controller has access to a fixed catalog of possible apps $\mathcal{A}$. Under the current context $x\in\mathcal{X}$, where $\mathcal{X}$ is the set of possible contexts, the controller selects an app $a\in\mathcal{A}$ to be run. For generality, we allow this selection to be randomized, and we accordingly describe it via the conditional probability $p(a|x)$ of choosing app $a\in\mathcal{A}$ given context $x$.

Running app $a$ under the operating condition $x$ yields KPIs $y_a\in\mathcal{Y}$, where $\mathcal{Y}$ is the set of possible values for the KPIs of interest. Consider the set of KPIs $\{y_{a'}\}_{a'\in \mathcal{A}}$, where $y_{a'}$ is the KPIs that would be attained if app $a'\in\mathcal{A}$ had been chosen. Following the standard \emph{potential outcomes} framework\cite{ITE-CP1}, the variables under study are jointly distributed according to the distribution
\begin{equation}\label{joint_distribution}
    p(x,a,\{y_{a'}\}_{a'\in\mathcal{A}}) = p(x)p(\{y_{a'}\}_{a'\in\mathcal{A}}|x)p(a|x),
\end{equation}
where $p(x)$ is the probability of context $x$; $p(\{y_{a'}\}_{a'\in\mathcal{A}}|x)$ is the conditional distribution of the potential outcomes given $x$; and $p(a|x)$ describes the mentioned controller's app selection function. The equality (\ref{joint_distribution}) uses the fact that, by definition of context $x$, the selection of the app $a$ depends exclusively on $x$. This also implies that the system under study satisfies the strong ignorability assumption\cite{ITE-CP1, ITE-CP2}.

In order to be able to estimate the potential outcome $y_{a'}$ for some app $a'$ different from the selected app $a$, we make the standard stable unit treatment value assumption (SUTVA). Accordingly, the observed KPI $y$ under context $x$ when app $a$ is selected given by the corresponding potential KPI $y_a$, i.e.,
\begin{equation}\label{consist}
    y = y_a. 
\end{equation}
This assumption is valid as long as there is no noise, adversarial or random, on the reading and reporting our KPIs. KPI observation noise can be addressed as discussed in Section \ref{WCP}.

Following reference \cite{ITE-CP1}, the objective of counterfactual analysis can be formalized as follows: Given that, under context $x\sim p(x)$, the selected app $a$ has produced KPIs $y_a$, construct a prediction set $\Gamma_{a'}(x|a)\subseteq \mathcal{Y}$ for the KPIs $y_{a'}$ that would have been obtained under the same context $x$ had app $a'\neq a$ been used instead. Note that the subscript $a'$ in the set $\Gamma_{a'}(x|a)$ indicates that the prediction set is designed for the app $a'$, while the “conditioning” on $a$ clarifies that controller has in fact run app $a$. 

For the given fixed pairs of apps $a$ and $a'$, the prediction interval $\Gamma_{a'}(x|a)$ must satisfy the coverage requirement
\begin{equation}\label{p1}
    {\Pr}\left[y_{a'}\in\Gamma_{a'}(x|a)\right]\geq 1-\alpha.
\end{equation}
By (\ref{p1}), the prediction interval $\Gamma_{a'}(x|a)$ contains the true KPIs $y_{a'}$ with probability at least $1-\alpha$, where $\alpha\in[0,1]$ is a pre-determined target miscoverage level set by the network. The probability in (\ref{p1}) is evaluated with respect to the joint distribution of all the data used to obtain the set $\Gamma_{a'}(x|a)$ and of the test data $(x,a,y)$, as detailed in Section \ref{sec:calibration}.

\subsection{Examples}\label{example_section}
To showcase the usefulness of counterfactual analysis in wireless systems, we present next two examples of applications, which will be further studied in Section \ref{example2} and Section \ref{example1} via numerical results.

\subsubsection{Scheduling at the medium access control layer} Consider the wireless multi-access channel in Fig. \ref{fig:system_model} in which multiple user equipments (UEs) share the same wireless resources. In this setting, the context $x$ may encode the backlog of the queues at each UE at the beginning of a scheduling frame, as well as the UEs' channel quality indicators (CQIs). The app $a$ is a scheduling algorithm, selected based on context $x$ from a set of scheduling algorithms, such as RR or PFCA\cite{mobile}. The KPIs $y$ may encompass the backlogs remaining at the UEs at the end of the frame. 

In this scenario, counterfactual analysis can be used to answer the question: Given that under the initial backlog $x$ the system has chosen to deploy, say, the PFCA scheduling algorithm, what would the final backlogs have been if the RR scheduling algorithm had been used instead?

\subsubsection{Multi-antenna transmission at the physical layer}
 We now consider the multi-antenna communication link in Fig. \ref{fig:SER}. In this scenario, the controller, which resides at the transmitter, determines the transmission mode and modulation scheme. This selection is done on the basis of a context $x$ encompassing information about the propagation channel, such as the estimated SNR and observations related to the richness of the multipath environment. The physical-layer transmission app $a$ implements either a multiplexing or diversity-based space-time transmission method\cite{MIMO channel model} with a modulation scheme selected from a given set of constellations. Assuming a standard automatic retransmission protocol (ARQ), the KPIs $y$ may measure the latency and/or the throughput. 
 
In this case, counterfactual analysis allows to answer the what-if question: Given that under the current channel conditions $x$ the system has implemented, say, a multiplexing-based scheme with BPSK, what would the latency have been if a diversity-based method, such as Alamouti coding with QPSK, had been used instead?

\begin{figure*}[htbp]
    \centering
    \includegraphics[width=10cm]{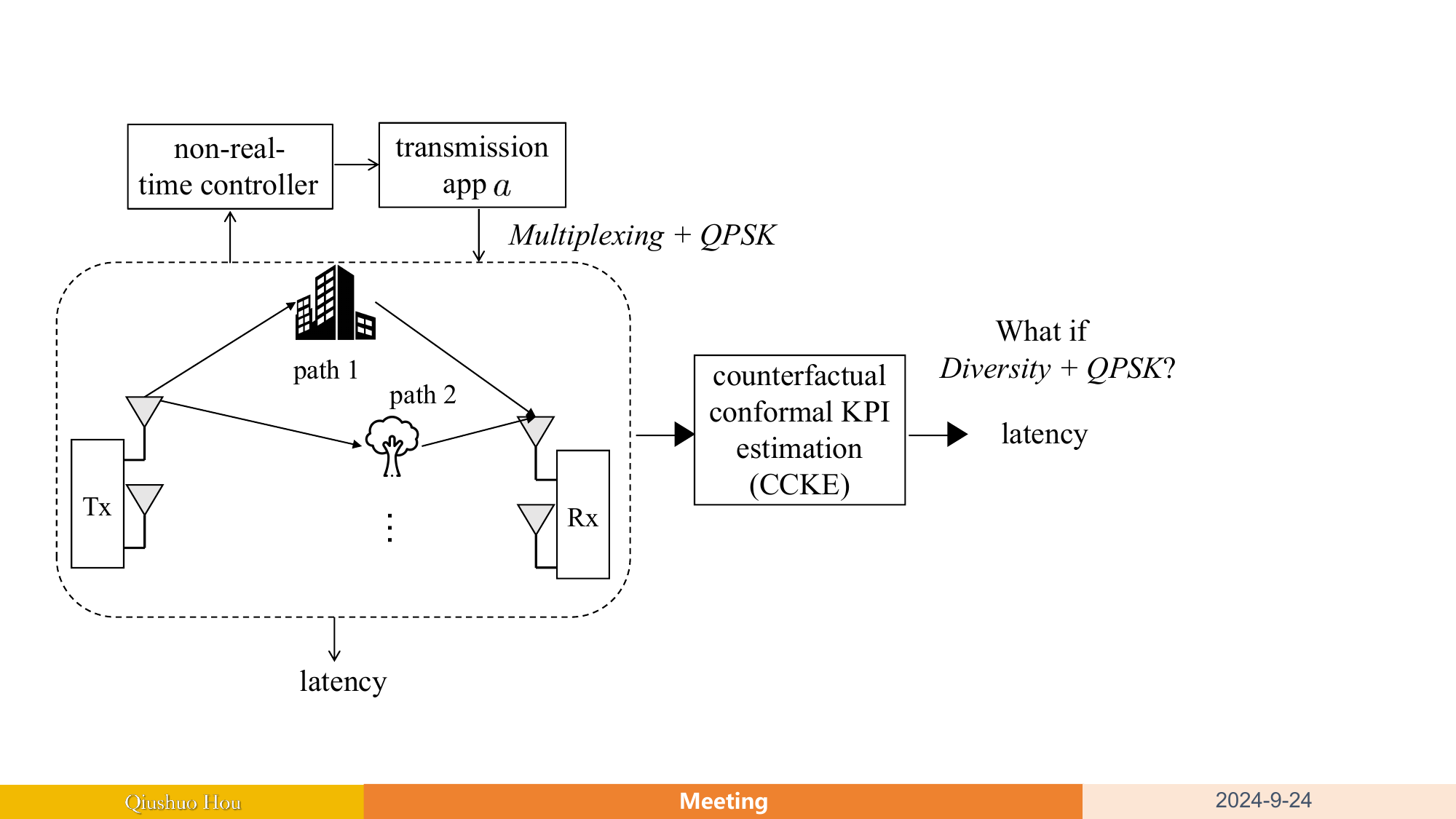}
    \vspace{-0.2cm}
    \caption{ In a multi-antenna communication link, based on the context $x$ encompassing the initial average SNR and information about the propagation environment, the non-real-time controller chooses a transmission app $a$. The transmission app $a$ is determined by a multiplexing-based or diversity-based space-time method, along with a constellation. Assuming that the controller has selected a multiplexing-based scheme with a QPSK constellation, what would the latency have been if the transmitter had used a diversity-based scheme with QPSK?}
    \label{fig:SER}
    \vspace{-0.5cm}
\end{figure*}

\subsection{Calibrating the Prediction Set}
\label{sec:calibration}
As illustrated in Fig. \ref{fig:O-RAN}, to produce a prediction set $\Gamma_{a'}(x|a)$ that satisfies the coverage condition (\ref{p1}), we assume that the controller has access to historical data $\mathcal{D} = \{(x_n,a_n,y_n)\}_{n=1}^N$, where each data entity $(x_n,a_n,y_n)$ represents an observation of 
context $x_n\sim p(x)$, app $a_n\sim p(a|x_n)$, and KPI outcome $y_n = y_{a_n}\sim p(y_{a_n}|x_n)$, for $n=1,\dots,N$. The data points in the set $\mathcal{D}$ are assumed to be independent and identically distributed (i.i.d.), i.e.,
\begin{equation}
    (x_n,a_n,y_n)\mathop{\sim}\limits^{\text{i.i.d.}} p(x)p(y_a|x)p(a|x), \text{for}\  n=1,\dots,N.
\end{equation}

At testing time, the controller observes an independent tuple
\begin{equation}\label{test_data}
    (x,a,y)\sim p(x)p(y_a|x)p(a|x),
\end{equation}
corresponding to a new context context $x$, selected app $a$, and corresponding KPI $y$. The goal of the controller is to estimate the KPI outcome $y_{a'}$ that would have been obtained by another app $a'\neq a$ by providing a prediction set $\Gamma_{a'}(x|a)$. The prediction set $\Gamma_{a'}(x|a)$ must satisfy the condition (\ref{p1}) on average over the data from the data set $\mathcal{D}$ used to construct the set $\Gamma_{a'}(x|a)$ and over the test data.

\section{Conformal Counterfactual Analysis}\label{CCA}
In this section, we introduce a scheme that leverages \emph{weighted conformal prediction} (WCP) \cite{weight-CP} to obtain a prediction set $\Gamma_{a'}(x|a)$ that meets the coverage requirement (\ref{p1}). The overall algorithm, referred to as \emph{counterfactual conformal KPI estimation} (CCKE), is described in Algorithm \ref{A1}, and an illustration is provided in Fig.~\ref{fig:cartoon}, with details explained in the rest of this section. 

\begin{figure*}[htbp]
    \centering
    \includegraphics[width=14cm]{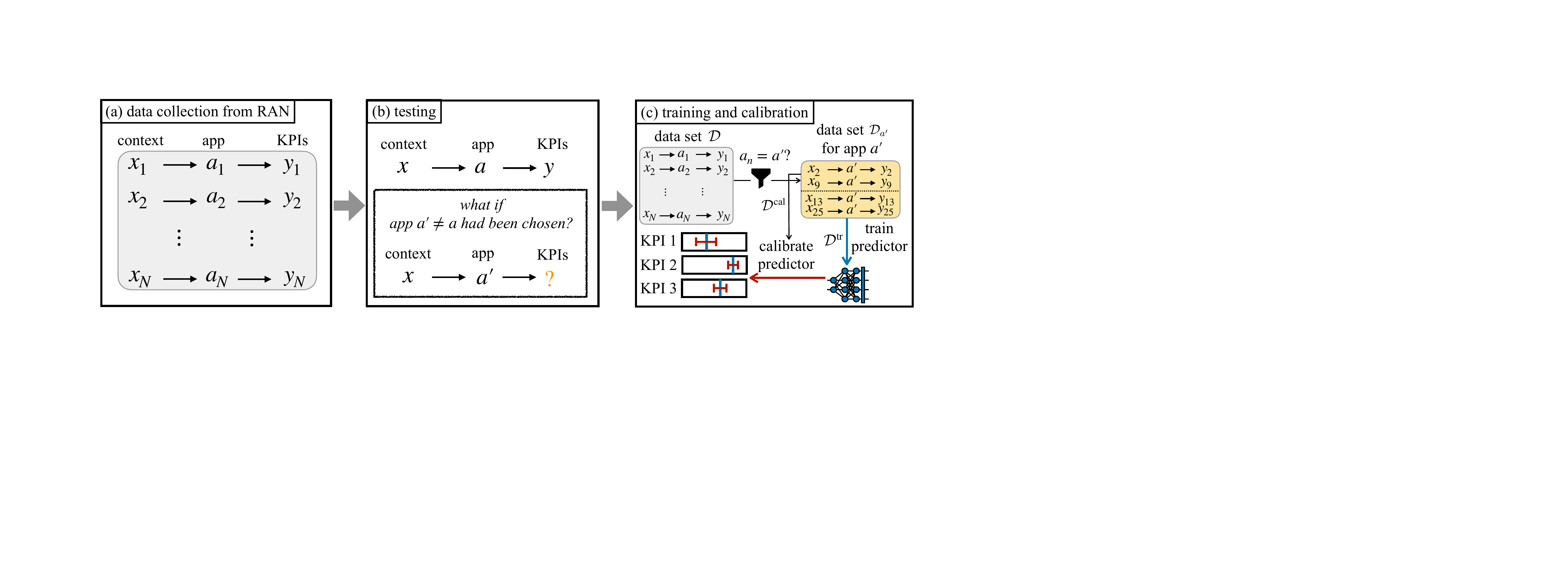}
    \vspace{-0.2cm}
    \caption{Overview of CCKE: (a) The RAN collects a data set $\mathcal{D}$ consisting of context $x_n$, selected app $a_n$, and corresponding observed KPIs $y_n$ for $n=1,...,N$. (b) Given the new context $x$, the controller selects app $a$ and the RAN observes the corresponding KPIs $y$. CCKE aims at reliably estimating the observed KPIs if app $a'\neq a$ had been chosen instead of $a$ for the current context $x$. (c) To do so, CCKE first extracts data set $\mathcal{D}_{a'}$ from the entire data set $\mathcal{D}$ by choosing only the data points corresponding to app $a'$. Then, the first split of the data set is used to train the predictor (\ref{CQR}), while the remaining data are used for calibration via (\ref{NC score}). The resulting calibrated prediction interval (\ref{w_interval_calculate}) provably contains the true KPIs $y_{a'}$ for the target app $a'$ of interest (Lemma~\ref{lemma}).   }
    \label{fig:cartoon}
    \vspace{-0.2cm}
\end{figure*}

\subsection{Addressing Multiple KPIs}
Reflecting settings of practical interest, we assume henceforth that the KPI $y\in\mathcal{Y}$ consists of $k$ scalar KPIs $y=[y^1,\dots,y^K]$, so that the set $\mathcal{Y}$ coincides with the $k$-dimensional real space $\mathbb{R}^K$. Accordingly, our goal is to produce $K$ prediction intervals $\{\Gamma_{a'}^k(x|a)\}_{k\in[K]}$, collectively denoted as $\Gamma_{a'}(x|a)$, such that the following simultaneous coverage condition is satisfied (cf. (\ref{p1}))
\begin{equation}\label{p1_multi}
    \Pr[y^k_{a'}\in\Gamma_{a'}^k(x|a)\ \text{for all}\  k=1,\dots,K]\geq 1-\alpha.
\end{equation}
The coverage guarantee  (\ref{p1_multi}) ensures that the true KPI vector $y_{a'} = [y^1_{a'},\dots, y^K_{a'}]$ that would have been obtained under app $a'$ for context $x$ is in set $\Gamma_{a'}(x|a) = \{\Gamma_{a'}^k(x|a)\}_{k\in[K]}$ with a probability no smaller than $1-\alpha$.

\subsection{Data Selection}
To produce the KPIs prediction set $\Gamma_{a'}(x|a)$ for the target app $a'$,  CCKE first selects the subset of data points from the data set $\mathcal{D}$ that were collected for the target app $a'$, i.e., data points of the form $(x,a',y)$. We denote the resulting data set as ${\mathcal{D}}_{a'}=\{(x_n,y_n)\}_{n=1}^{{N}_{a'}}$, where we have dropped the app identifier $a'$ from the selected 
 tuple $(x_n,a_n=a',y_n)$ to avoid clutter. Note that ${N}_{a'}$ is the number of instances that the app $a'$ was selected, i.e., ${N}_{a'} = \sum_{n=1}^N \mathbbm{1}(a_n=a')$.

The resulting data set ${\mathcal{D}}_{a'}$ is partitioned into a training set $\mathcal{D}^{\text{tr}}$ with $N^{\text{tr}}$ data points and a calibration set $\mathcal{D}^{\text{cal}}={\mathcal{D}}_{a'}\setminus\mathcal{D}^{\text{tr}}$ with $N^{\text{cal}} = N_{a'}-N^{\text{tr}}$ data points. Note that we drop the dependence of data sets $\mathcal{D}^{\text{tr}}$ and $\mathcal{D}^{\text{cal}}$ on the app identifier $a'$ to simplify the notation. As described in the next subsection, the training set $\mathcal{D}^{\text{tr}}$ is used to train a prediction model for the KPI $y$ given the context $x$. By construction, the trained prediction model targets data distributed as $(x,y)\sim p(x|a')p(y_{a'}|x)$, and thus it aims at predicting the KPIs attained by the target app $a'$. The calibration set $\mathcal{D}^{\text{cal}}$ is used to calibrate the output of the trained predictor in order to obtain the set predictor $\Gamma_{a'}(x|a)$, as detailed in Section \ref{WCP}. 

\subsection{Prediction Model Training}

CCKE adopts a quantile regression model to estimate the $\alpha/2$- and $(1-\alpha/2)$-quantiles of KPIs $\{y^k\}^K_{k=1}$ for any context $x$ based on the training data set $\mathcal{D}^{\text{tr}}$, which reports the KPIs of interest, $\{y_n^k\}_{k=1}^K$, under different contexts $x_n$ with $n=1,\dots,N^{\text{tr}} $\cite{CQR}.  

For $\tau\in[0,1]$, the $\tau$-quantile of the KPI $y^k$ given context $x$ is defined as
\begin{align}
    \label{eq:kpi_quantile}
    q^k_{\tau}{(x)}=\mathrm{inf}\{q^k\in\mathbb{R}: \Pr[y^k\leq q^k {|x}]\geq \tau\},
\end{align}
with a probability taken with respect to the conditional distribution of random variable $y^k$ given $x$. The quantile (\ref{eq:kpi_quantile}) is equivalent to the minimizer of the \emph{pinball} loss \cite{pinball loss}, i.e.,  
\begin{equation}\label{eq:expectation of KPI}
    q_\tau^k(x) = \arg\min_{\hat{q}^k}\mathbb{E}\big[\ell_\tau(y^k, \hat{q}^k)|x\big]
\end{equation}
where
\begin{equation}
    \ell_\tau(y^k, \hat{q}^k) = \mathop{\rm{max}} \left\{\tau (y^k-q^k), -(1-\tau)(y^k-q^k)\right\},
\end{equation}
and the expectation in (\ref{eq:expectation of KPI}) is taken with respect to the conditional distribution of $y^k$ given $x$. By (\ref{eq:expectation of KPI}), we can equivalently write
\begin{align}\label{eq:quantile problem}
    [q_\tau^1(x),...,q_\tau^K(x)] = \arg\min_{\hat{q}^1,...,\hat{q}^K}\sum_{k=1}^K \mathbb{E}\big[\ell_\tau(y^k, \hat{q}^k)|x\big].
\end{align}

Quantile regression \cite{pinball loss} can be leveraged to estimate quantiles by addressing problem (\ref{eq:quantile problem}) over the parameter vector $\phi$ of parametric functions $\{\hat{q}_{\tau}^k(\cdot|\phi)$ for $k=1,\dots,K\}$. Specifically, estimating the averages in (\ref{eq:quantile problem}) using data set $\mathcal{D}^{\text{tr}}$, one obtains the training problem
\begin{align}
    \phi_\tau = \arg\min_\phi \bigg\{ \sum_{n=1}^{N^\text{tr}} \sum_{k=1}^K \ell_\tau(y_n^k,  \hat{q}_\tau^k(x_n|\phi)  )  \bigg\}.
\end{align}
For the purpose of ensuring the reliability condition (\ref{p1}), we are specifically interested in estimating the $\alpha/2$- and $(1-\alpha/2)$-quantiles, yielding the final training problem

\begin{equation}\label{CQR}
    \phi^* = \mathop{\rm{argmin}}\limits_{\phi} \left\{\sum_{n=1}^{N^{\text{tr}}}\sum_{k=1}^{K}\bigg(\ell_{\alpha/2}(y_n^k,\hat{q}_{\alpha/2}^k(x_n|\phi))\right.
        \left.+\ell_{1-\alpha/2}(y_n^k,\hat{q}_{1-\alpha/2}^k(x_n|\phi))\bigg)\right\}.
\end{equation}

Given the trained parameter vector $\phi^*$, the $\alpha/2$-quantile $q_{\alpha/2}^k(x)$ is estimated as $\hat{q}_{\alpha/2}^k(x|\phi^*)$, while the $(1-\alpha/2)$-quantile $q_{1-\alpha/2}^k(x)$ is estimated as $\hat{q}_{1-\alpha/2}^k(x|\phi^*)$, for any context $x \in \mathcal{X}$.

With these quantile estimates, given a new context $x$, the predicted interval for each KPI $y^k$, with $k=1,\dots, K$, is obtained as
\begin{equation}\label{naive prediction interval}
   \tilde{\Gamma}_{a'}^k(x) = [\hat{q}_{\alpha/2}^k(x|\phi^*), \hat{q}_{1-\alpha/2}^k(x|\phi^*)].
\end{equation}
However, the intervals (\ref{naive prediction interval}) generally fail to satisfy the coverage condition (\ref{p1}), since (\emph{\romannumeral1}) the model $\hat{q}_\tau(x|\phi^*)$ may not be expressive enough to capture the true function mapping context into quantiles; and (\emph{\romannumeral2}) the amount of training data $N^{\text{tr}}$ is limited, not representing the full joint distribution of contexts and KPIs. The lack of coverage guarantee for the intervals (\ref{naive prediction interval}) is addressed by the proposed CCKE by utilizing the calibration data $\mathcal{D}^\text{cal}$, as explained next.

\subsection{Calibration via Weighted Conformal Prediction}\label{WCP}
In this subsection, we explain how to apply WCP\cite{weight-CP} to calibrate each prediction interval in (\ref{naive prediction interval}), so that it satisfies the coverage condition (\ref{p1}). 

A key challenge in calibrating the prediction interval is that the context variables $x$ in the calibration data set $\mathcal{D}^{\text{cal}}$ are drawn from a different distribution as compared to the context $x$ in the test data $(x,a,y)$. In fact, the contexts $x$ in the calibration data set $\mathcal{D}^{\text{cal}}$ follow the distribution $p(x|a') = p(x)p(a'|x)/p(a')$, depending on the target app $a'$, while the context $x$ in the test tuple follows the distribution $p(x|a) = p(x)p(a|x)/p(a)$, which depends on the actual app $a$ that was run under the current context $x$.

The \emph{covariate shift} between the calibration distribution $p(x|a')$ and the test distribution $p(x|a)$ arises from the fact that the controller selects apps $a$ based on the context $x$. If the controller chose the app $a$ in a manner independent of the context $x$, one would have the equality thus $p(x|a) = p(x|a') = p(x)$. However, in practice, a controller that selects apps $a$ without considering the context $x$ is not likely to perform well. For example, with the physical-layer transmission setting in Section \ref{example_section}, the transmission scheme should be adapted to the channel condition.

Following CP\cite{CP,VB-CP}, using the preliminary prediction intervals (\ref{naive prediction interval}), WCP first computes a score for each $n$-th calibration data point $(x_n,y_n)$ as \cite{CQR, CQR_2}
\begin{equation}\label{NC score}
    s_n = \mathop{\rm{max}}\limits_{k=1,\dots,K} \mathop{\rm{max}} \left\{\hat{q}_{\alpha/2}^k(x_n|\phi^*)-y_n^k, y_n^k-\hat{q}_{1-\alpha/2}^k(x_n|\phi^*)\right\}.
\end{equation}
The score (\ref{NC score}) measures the worst-case error made when using the uncalibrated prediction intervals (\ref{naive prediction interval}) across all KPIs. In fact, the score $s_n$ is positive when at least one of the true KPIs $y_n^k$ is outside the interval (\ref{naive prediction interval}), increasing as $y_n^k$ moves further away from the interval; while it is negative otherwise, decreasing as $y_n^k$ moves further within the set. The maximum operator over the $K$ KPIs in (\ref{NC score}) is crucial to ensure the simultaneous coverage condition (\ref{p1_multi})\cite{NC score}.

WCP then evaluates a correction for the prediction set (\ref{naive prediction interval}) that satisfies (\ref{p1_multi}). To this end, it produces the sets $\{\Gamma_{a'}^k(x|a)\}_{k\in[K]}$, where
\begin{equation}\label{w_interval_calculate}
    \Gamma_{a'}^k(x|a) = \big[\hat{q}_{\alpha/2}^k(x|\phi^*)-\hat{q}_w, \hat{q}_{1-\alpha/2}^k(x|\phi^*)+\hat{q}_w\big]
\end{equation}
for $k=1,\dots,K$, given a correction term $\hat{q}_w$ optimized using the calibration data. Specifically, the correction term $\hat{q}_w$ is evaluated as the $(1-\alpha)(N^{\text{cal}}+1)/N^{\text{cal}}$-quantile -- i.e., the $\lceil(1-\alpha)(N^{\text{cal}}+1)\rceil$-th smallest value -- of a \emph{weighted empirical distribution} of the calibration scores $\{s_n\}_{n=1}^{N^{\text{cal}}}$, with weights accounting for the covariate shift. 

Mathematically, the weighted empirical distribution is obtained as 
\begin{equation}\label{weight distribution}
    \sum_{n=1}^{N^{\text{cal}}}p_n(x)\delta_{s_n}+p_{\infty}(x)\delta_{\infty},
\end{equation}
where $\delta_{s_n}$ denotes a distribution that places all probability mass at the value $s_n$, and the probabilities $\{p_1(x),\dots,p_{N^{\text{cal}}}(x),p_\infty(x)\}$ are evaluated as
\begin{align}
    p_n(x) &= \frac{w_{a' \rightarrow a}(x_n)}{\sum_{n=1}^{N^{\text{cal}}}w_{a' \rightarrow a}(x_n)+w_{a' \rightarrow a}(x)},\label{no_ca}\\
    p_\infty(x) &= \frac{w_{a' \rightarrow a}(x)}{\sum_{n=1}^{N^{\text{cal}}}w_{a' \rightarrow a}(x_n)+w_{a' \rightarrow a}(x)}\label{no_test},
\end{align}
with weights obtained by the density ratio as
\begin{equation}\label{weight}
    w_{a' \rightarrow a}(x) = \frac{p(a|x)}{p(a'|x)}.
\end{equation}
The weights (\ref{weight}) can be evaluated by using the controller's app selection function $p(a|x)$. Including the density ratio (\ref{weight}) in the empirical distribution (\ref{weight distribution}) compensates for the covariate shift between testing data and the calibration data, thus guaranteeing the coverage condition (\ref{p1}) (see, e.g., \cite{weight-CP, doubly_CP}). Note that the correction term $\hat{q}_w$ can now be explicitly written as (cf. (\ref{eq:kpi_quantile}))
\begin{equation}
    \hat{q}_w = \mathrm{inf} \left\{s\in\mathbb{R}\cup \{+\infty\}: \sum_{n=1}^{N^{\text{cal}}} p_n(x) \cdot \mathbbm{1}(s_n\leq s) + p_\infty(x) \mathbbm{1}(\infty\leq s) \geq \frac{(1-\alpha)(N^{\text{cal}}+1)}{N^{\text{cal}}}\right\}
\end{equation}
with the convention that $\mathbbm{1}(\infty\leq\infty) = 1$, where $\mathbbm{1}(\cdot)$ is the indicator function ($\mathbbm{1}(\text{true}) = 1$ and $\mathbbm{1}(\text{false}) = 0$).

Note that CCKE only requires computing an empirical quantile, an operation whose complexity is the same as for ordering, i.e., $O(N^{\text{cal}}{\rm{log}}N^{\text{cal}})$\cite[Chapter 8]{cormen2009introduction}. Therefore, CCKE is easily scaled to large networks. The overall CCKE algorithm is summarized in Algorithm \ref{A1}, and it provides the following guarantees.

\subsection{Theoretical Guarantees}
Assuming exact knowledge of the density ratio (\ref{weight}) and wireless KPI observation (\ref{consist}), CCKE satisfies the following guarantee.

\begin{lemma}[Coverage guarantee of CCKE]\label{lemma}
    For any $\alpha\geq 1/(N^{\text{cal}}+1)$, the prediction set (\ref{w_interval_calculate}) produced by CCKE satisfies the inequality (\ref{p1_multi}) with probability evaluated with respect to the joint distribution of calibration and test data for any two given app identifiers $a$ and $a'$.
\end{lemma}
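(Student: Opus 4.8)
The plan is to reduce the simultaneous $K$-dimensional coverage requirement (\ref{p1_multi}) to a one-dimensional coverage statement about the scalar score (\ref{NC score}), and then to invoke the standard WCP guarantee under covariate shift \cite{weight-CP}. First I would observe that, by the form of the corrected intervals (\ref{w_interval_calculate}), the event $\{y_{a'}^k\in\Gamma_{a'}^k(x|a)\text{ for all }k\}$ holds if and only if the test score
$$s=\max_{k=1,\dots,K}\max\{\hat{q}_{\alpha/2}^k(x|\phi^*)-y_{a'}^k,\ y_{a'}^k-\hat{q}_{1-\alpha/2}^k(x|\phi^*)\}$$
satisfies $s\le\hat{q}_w$. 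Indeed, the inclusion $y_{a'}^k\in[\hat{q}_{\alpha/2}^k(x|\phi^*)-\hat{q}_w,\ \hat{q}_{1-\alpha/2}^k(x|\phi^*)+\hat{q}_w]$ is equivalent to the pair of inequalities $\hat{q}_{\alpha/2}^k(x|\phi^*)-y_{a'}^k\le\hat{q}_w$ and $y_{a'}^k-\hat{q}_{1-\alpha/2}^k(x|\phi^*)\le\hat{q}_w$, and taking the maximum over these two terms and over $k$ collapses the $K$ simultaneous inclusions into $s\le\hat{q}_w$. The inner and outer maxima in (\ref{NC score}) are exactly what make this reduction tight, so (\ref{p1_multi}) becomes the scalar claim $\Pr[s\le\hat{q}_w]\ge 1-\alpha$.

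Next I would pin down the precise nature of the mismatch between the calibration scores $\{s_n\}_{n=1}^{N^{\mathrm{cal}}}$ and the test score $s$. The calibration points carry $x_n\sim p(x|a')$ with observed outcome $y_n=y_{a'}$, whereas the test point carries $x\sim p(x|a)$ with counterfactual target $y_{a'}$. The crucial step is to argue, from the factorization (\ref{joint_distribution}) together with the consistency assumption (\ref{consist}), that the conditional law of the potential outcome $y_{a'}$ given $x$ equals $p(y_{a'}|x)$ in both cases, regardless of which app was actually selected. This is precisely the strong ignorability built into (\ref{joint_distribution}), whereby the selected app is conditionally independent of the potential outcomes given $x$. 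Since the score is a deterministic function of $(x,y_{a'})$ through the fixed quantile maps $\hat{q}_{\alpha/2}^k(\cdot|\phi^*)$, $\hat{q}_{1-\alpha/2}^k(\cdot|\phi^*)$, its conditional distribution given $x$ is therefore identical for calibration and test, and the sole discrepancy between the two score laws is the covariate shift in the marginal of $x$, i.e. $p(x|a')$ versus $p(x|a)$.

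I would then compute the density ratio of the test context to the calibration context,
$$\frac{p(x|a)}{p(x|a')}=\frac{p(a|x)/p(a)}{p(a'|x)/p(a')}=\frac{p(a')}{p(a)}\,w_{a'\rightarrow a}(x),$$
using $p(x|a)=p(x)p(a|x)/p(a)$, and observe that the constant factor $p(a')/p(a)$ cancels in the self-normalized weights (\ref{no_ca})--(\ref{no_test}). Hence the weights $w_{a'\rightarrow a}(x)=p(a|x)/p(a'|x)$ in (\ref{weight}) that define the weighted empirical distribution (\ref{weight distribution}) are, up to normalization, exactly the correct likelihood ratio. With this identification, the claim follows from the weighted-exchangeability argument of WCP \cite{weight-CP}: reweighting the calibration scores by this ratio makes the augmented collection $(s_1,\dots,s_{N^{\mathrm{cal}}},s)$ weighted exchangeable, so that the weighted rank of $s$ among all $N^{\mathrm{cal}}+1$ scores is sub-uniform. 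Placing the test point's own (unknown) weight at the atom $\delta_\infty$ as in (\ref{no_test}) yields a conservative quantile, and taking $\hat{q}_w$ to be the $(1-\alpha)(N^{\mathrm{cal}}+1)/N^{\mathrm{cal}}$-quantile of (\ref{weight distribution}) then delivers $\Pr[s\le\hat{q}_w]\ge 1-\alpha$. The hypothesis $\alpha\ge 1/(N^{\mathrm{cal}}+1)$ ensures this quantile level does not exceed one, so that $\hat{q}_w$ is generically attained at a finite calibration score.

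I expect the main obstacle to be the second step: rigorously showing that the conditional distribution of the score given $x$ is the same for calibration and test, so that the problem is genuinely one of pure covariate shift in $x$ alone. This is where the potential-outcomes assumptions (\ref{joint_distribution}) and (\ref{consist}) carry the argument, and a careless treatment would conflate the observational law $p(y|x,a')$ of data logged under $a'$ with the counterfactual target $p(y_{a'}|x)$ evaluated at a test context drawn under $a$. Once this equivalence is secured, the first and third steps are essentially bookkeeping.
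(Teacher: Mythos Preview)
Your proposal is correct and follows the same underlying route as the paper, which simply invokes \cite[Proposition~1]{ITE-CP1} without further detail; your three-step argument (reduction of the simultaneous $K$-coverage to the scalar inequality $s\le\hat{q}_w$ via the max-score, identification of the problem as pure covariate shift in $x$ using ignorability (\ref{joint_distribution}) and consistency (\ref{consist}), and application of the WCP guarantee \cite{weight-CP} with self-normalized weights (\ref{weight})) is precisely the content of that cited proposition, unpacked for the multi-KPI score (\ref{NC score}).
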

\begin{proof}
    This result follows directly from {\protect{\cite[Proposition 1]{ITE-CP1}}}.
\end{proof}

Assume now that the density ration $w_{a'\rightarrow a}(x)$ in (\ref{weight}) is only known via an estimate $\hat{w}_{a'\rightarrow a}(x)$.

\begin{lemma}[Coverage guarantee of CCKE with estimated weight $\hat{w}_{a'\rightarrow a}(x)$]\label{lemma2}
    For any $\alpha\geq 1/(N^{\text{cal}}+1)$, the prediction set (\ref{w_interval_calculate}) produced by CCKE with estimated weight $\hat{w}_{a'\rightarrow a}(x)$ in lieu of the true density ratio $w_{a'\rightarrow a}(x)$ satisfies the inequality
    \begin{equation}\label{weight mismatch}
        \Pr\left[y_{a'}^k \in \Gamma_{a'}^k(x|a)\ \text{for all}\  k = 1,\dots, K\right] \geq 1 - \alpha - \frac{1}{2}\mathbb{E}_{x \sim p(x|a')}|\hat{w}_{a'\rightarrow a}(x) - w_{a'\rightarrow a}(x)|.
    \end{equation}
\end{lemma}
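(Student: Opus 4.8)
The plan is to reduce the claim to the exact-coverage result of Lemma~\ref{lemma} and to treat the use of $\hat{w}_{a'\rightarrow a}(x)$ in place of $w_{a'\rightarrow a}(x)$ as a controlled change of target distribution. The key starting observation is that the density ratio enters CCKE \emph{only} through the weights (\ref{weight}) in the normalized masses (\ref{no_ca})--(\ref{no_test}) of the weighted empirical distribution (\ref{weight distribution}), and hence only through the correction term $\hat{q}_w$ in the prediction set (\ref{w_interval_calculate}). Therefore, running CCKE with $\hat{w}_{a'\rightarrow a}$ is \emph{identical} to running the exact procedure that would be correctly calibrated for a fictitious test distribution $\tilde{P}_{\hat w}$ whose likelihood ratio with respect to the calibration distribution $p(x|a')$ equals the (normalized) estimated weight. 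By Lemma~\ref{lemma} applied with $\hat{w}_{a'\rightarrow a}$ regarded as the true ratio for the target $\tilde{P}_{\hat w}$, the coverage event $\mathcal{E}=\{y_{a'}^k\in\Gamma_{a'}^k(x|a)\text{ for all }k=1,\dots,K\}$ satisfies $\Pr_{\tilde{P}_{\hat w}}[\mathcal{E}]\ge 1-\alpha$ whenever $\alpha\ge 1/(N^{\text{cal}}+1)$.

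Next I would account for the fact that the real test context is drawn from $p(x|a)$, not from $\tilde{P}_{\hat w}$. Under the potential-outcomes/strong-ignorability structure of (\ref{joint_distribution}), the conditional outcome law $p(y_{a'}|x)$ and the calibration law $p(x|a')p(y_{a'}|x)$ are shared between the real and the fictitious setups; the two joint laws of $(\text{calibration data},x,y_{a'})$ differ \emph{only} in the marginal of the test context. Since $\mathcal{E}$ is a fixed measurable functional of this joint, the elementary bound $|\Pr_P[\mathcal{E}]-\Pr_Q[\mathcal{E}]|\le d_{\mathrm{TV}}(P,Q)$ gives
\begin{equation}
    \Pr\!\left[\mathcal{E}\right]\ \ge\ \Pr_{\tilde{P}_{\hat w}}\!\left[\mathcal{E}\right]\ -\ d_{\mathrm{TV}}\!\left(p(x|a),\,\tilde{P}_{\hat w}\right)\ \ge\ (1-\alpha)\ -\ d_{\mathrm{TV}}\!\left(p(x|a),\,\tilde{P}_{\hat w}\right),
\end{equation}
so it remains to bound the total-variation distance between the true target $p(x|a)$ and the fictitious target $\tilde{P}_{\hat w}$.

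The crux is then a direct computation. Both $p(x|a)$ and $\tilde{P}_{\hat w}$ are absolutely continuous with respect to the common base $p(x|a')$, with densities proportional to $w_{a'\rightarrow a}(x)$ and $\hat{w}_{a'\rightarrow a}(x)$, respectively (recall $p(x|a)\propto w_{a'\rightarrow a}(x)\,p(x|a')$ from the definitions in Section~\ref{WCP}). Using the identity $d_{\mathrm{TV}}=\tfrac12\|\cdot\|_{L^1}$ for densities against $p(x|a')$, this distance equals half the $L^1$ distance of the two (normalized) weight functions, which is $\tfrac12\,\mathbb{E}_{x\sim p(x|a')}|\hat{w}_{a'\rightarrow a}(x)-w_{a'\rightarrow a}(x)|$ up to the normalization constant, recovering the bound (\ref{weight mismatch}); this is exactly the robustness argument underlying WCP in \cite{weight-CP,ITE-CP1}.

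The main obstacle I anticipate is the first reduction rather than the final $d_{\mathrm{TV}}=\tfrac12 L^1$ computation: one must verify rigorously that substituting $\hat{w}_{a'\rightarrow a}$ for $w_{a'\rightarrow a}$ corresponds \emph{exactly} to changing the target distribution (so that Lemma~\ref{lemma} applies verbatim to $\tilde{P}_{\hat w}$), and that the coverage functional itself is unchanged, with only the test-context marginal shifting. A secondary technical point is the weight normalization: since the weights $w_{a'\rightarrow a}$ and $\hat{w}_{a'\rightarrow a}$ enter (\ref{no_ca})--(\ref{no_test}) only through ratios, they are defined up to a multiplicative constant, and I would need to track this constant (and the atom $p_\infty(x)$ at $+\infty$) carefully when passing from the $L^1$ distance of normalized densities to the stated expectation of $|\hat{w}_{a'\rightarrow a}-w_{a'\rightarrow a}|$, as well as confirm that the condition $\alpha\ge 1/(N^{\text{cal}}+1)$ inherited from Lemma~\ref{lemma} keeps the required quantile index $\lceil(1-\alpha)(N^{\text{cal}}+1)\rceil$ attainable.
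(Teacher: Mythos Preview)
Your approach is correct and is precisely the argument underlying the result the paper cites: the paper's own proof is a one-line deferral to \cite[Appendix~A.1]{ITE-CP1}, and what you sketch (treat $\hat{w}_{a'\rightarrow a}$ as the exact ratio for a tilted target, invoke the exact-coverage guarantee of Lemma~\ref{lemma} there, then bound the gap to the true target $p(x|a)$ by total variation and identify $d_{\mathrm{TV}}$ with $\tfrac12\mathbb{E}_{p(x|a')}|\hat{w}-w|$) is exactly that appendix. Your flagged ``secondary technical point'' about normalization is real and worth keeping: the weights (\ref{weight}) satisfy $\mathbb{E}_{p(x|a')}[w_{a'\rightarrow a}(x)]=p(a)/p(a')$, not $1$, so the clean identification of $d_{\mathrm{TV}}$ with $\tfrac12\mathbb{E}_{p(x|a')}|\hat{w}-w|$ requires either reading $w,\hat w$ as the normalized density ratios $p(x|a)/p(x|a')$ (as in \cite{ITE-CP1}) or absorbing the constant, which the paper leaves implicit.
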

\begin{proof}
    This result follows from {\protect{\cite[Appendix A.1]{ITE-CP1}}}.
\end{proof}\\
The result (\ref{weight mismatch}) shows that, with an imperfect estimate $\hat{w}_{a'\rightarrow a}(x)$, the coverage is decreased from the nominal value $1-\alpha$ by an amount that depends on the average of the error $|\hat{w}_{a'\rightarrow a}(x) - w_{a'\rightarrow a}(x)|$.

Finally, performance guarantees can be also extended to the case in which the KPI observation is given by 
\begin{equation}\label{noise_equation}
    \tilde{y} = y_a+\epsilon,
\end{equation}
where $\epsilon$ is a noise variable, which is independent of all other variables.

\begin{lemma}[Coverage guarantee of CCKE with noisy KPI measurements]\label{lemma3}
    Assume that the observation noise is not always positive or negative, i.e., that we have the inequality
    \begin{equation}
        b = {\rm{min}} \{p(\epsilon \geq 0|x), p(\epsilon \leq 0|x)\}\textgreater 0.
    \end{equation}
    Then, for any $\alpha\geq 1/(N^{\text{cal}}+1)$, the prediction set (\ref{w_interval_calculate}) produced by CCKE using observations in (\ref{noise_equation}) satisfies the coverage guarantee
    \begin{equation}\label{noise guarantee}
    \Pr[y_{a'}\in\Gamma_{a'}(x|a)]\geq 1-\frac{\alpha}{b}.
    \end{equation}
\end{lemma}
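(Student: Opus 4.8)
The plan is to deduce the noisy guarantee from the noiseless Lemma~\ref{lemma} in two stages. First I would argue that, because the noise $\epsilon$ in (\ref{noise_equation}) is independent of all other variables, the noisy potential outcomes $\{\tilde{y}_{a'}\}_{a'\in\mathcal{A}}$ inherit the strong-ignorability and SUTVA structure of $\{y_{a'}\}_{a'\in\mathcal{A}}$: conditioned on the context $x$, the variable $\epsilon$ is independent of the app selection $a$, and the covariate-shift weights (\ref{weight}) are unaffected since they depend only on the ratio $p(a|x)/p(a'|x)$. As CCKE trains and calibrates directly on the noisy readings $\tilde{y}$, Lemma~\ref{lemma} applies verbatim with $\tilde{y}_{a'}$ in place of $y_{a'}$, yielding
\begin{equation}
  \Pr[\tilde{y}_{a'}^k\in\Gamma_{a'}^k(x|a)\ \text{for all}\ k=1,\dots,K]\geq 1-\alpha.
\end{equation}
The remaining task is to convert this coverage of the noisy outcome into coverage of the true outcome $y_{a'}$, at the cost of the factor $1/b$.

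For the conversion I would condition on the $\sigma$-algebra $\mathcal{F}$ generated by the training and calibration data together with the test pair $(x,y_{a'})$, so that every interval $\Gamma_{a'}^k(x|a)=[L^k,U^k]$ and the true vector $y_{a'}$ are $\mathcal{F}$-measurable, whereas the conditional law of the test noise $\epsilon$ given $\mathcal{F}$ reduces to $p(\epsilon|x)$ by its independence. On the miscoverage event $\{y_{a'}\notin\Gamma_{a'}(x|a)\}$ at least one coordinate $k_0$ is violated. If $y_{a'}^{k_0}>U^{k_0}$, then the sign event $\{\epsilon\geq 0\}$ (all components non-negative) forces $\tilde{y}_{a'}^{k_0}=y_{a'}^{k_0}+\epsilon^{k_0}>U^{k_0}$; if instead $y_{a'}^{k_0}<L^{k_0}$, then $\{\epsilon\leq 0\}$ forces $\tilde{y}_{a'}^{k_0}<L^{k_0}$. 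In either case the noisy vector also leaves the set, so that
\begin{equation}
  \Pr[\tilde{y}_{a'}\notin\Gamma_{a'}(x|a)\mid\mathcal{F}]\geq \min\{p(\epsilon\geq 0|x),\,p(\epsilon\leq 0|x)\}=b
\end{equation}
holds pointwise on that event.

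Finally I would integrate this bound. Since the last display gives $\Pr[\tilde{y}_{a'}\notin\Gamma_{a'}(x|a)\mid\mathcal{F}]\geq b\,\mathbbm{1}(y_{a'}\notin\Gamma_{a'}(x|a))$ almost surely, taking expectations over $\mathcal{F}$ yields $\Pr[\tilde{y}_{a'}\notin\Gamma_{a'}(x|a)]\geq b\,\Pr[y_{a'}\notin\Gamma_{a'}(x|a)]$. Combining with the stage-one bound $\Pr[\tilde{y}_{a'}\notin\Gamma_{a'}(x|a)]\leq\alpha$ and rearranging gives $\Pr[y_{a'}\notin\Gamma_{a'}(x|a)]\leq\alpha/b$, which is exactly (\ref{noise guarantee}). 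The main obstacle I anticipate is the uniformity required in the conditioning step: I must guarantee that the noisy outcome escapes the \emph{random} set with probability at least $b$ on the miscoverage event irrespective of which coordinate is violated and in which direction. This is precisely what the two-sided quantity $b=\min\{p(\epsilon\geq 0|x),p(\epsilon\leq 0|x)\}$ secures, and it is where the assumption $b>0$ is indispensable. A secondary check is to confirm that conditioning on the extra information in $\mathcal{F}$ beyond $x$ does not alter the noise distribution, i.e.\ that $\Pr[\epsilon\geq 0\mid\mathcal{F}]=p(\epsilon\geq 0|x)$, which follows from the assumed independence of $\epsilon$.
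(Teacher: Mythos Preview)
Your argument is correct and self-contained, whereas the paper does not give a proof at all: it simply cites Proposition~2.3 of \cite{noise CP}. Your two-stage strategy---first invoking Lemma~\ref{lemma} for the noisy potential outcome $\tilde{y}_{a'}$ (which is legitimate because the independent additive noise leaves the ignorability structure and the weights (\ref{weight}) unchanged), and then transferring coverage to $y_{a'}$ via the sign-of-noise argument conditioned on $\mathcal{F}$---is exactly the mechanism underlying that cited proposition, specialised to the present interval-valued, multi-KPI setting. So you are effectively reconstructing the external result rather than taking a different route; the payoff is that your write-up is self-contained and makes transparent \emph{why} the constant $b=\min\{p(\epsilon\geq 0|x),p(\epsilon\leq 0|x)\}$ is the right penalty.

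Two small points worth tightening. First, since the prediction set $\Gamma_{a'}(x|a)$ also depends on the selected app $a$ through the weights, you should either include $a$ in $\mathcal{F}$ or (more naturally, given the lemma statement) note explicitly that $a,a'$ are held fixed throughout, so this dependence is deterministic. Second, your conversion step tacitly uses that each $\Gamma_{a'}^k(x|a)$ is an \emph{interval} $[L^k,U^k]$; this is indeed the case by (\ref{w_interval_calculate}), but it is worth stating, since for general prediction sets the ``push in the violated direction'' argument would fail.
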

\begin{proof}
    This result follows directly from {\protect{\cite[Proposition 2.3]{noise CP}}}.
\end{proof}\\
The inequality (\ref{noise guarantee}) demonstrates that, due to the observation noise, the miscoverage rate increases to $\alpha/b$. Thus, the coverage degradation is more pronounced for more skewed noise distributions. In particular, for a symmetric distribution, the miscoverage rate is, at most, doubled to $2\alpha$.

\begin{table}[htp]
	\setlength{\abovecaptionskip}{-2pt}
	\setlength{\belowcaptionskip}{-6pt}
	\begin{algorithm}[H]
		\caption{\textbf{Counterfactual Conformal KPI Estimation} (CCKE)}
		\label{A1}
		{\normalsize
			\begin{algorithmic}[1]
               \State \textbf{Input:} Miscoverage level $\alpha$; data set $\mathcal{D}=\{(x_n,a_n,y_n)\}_{n=1}^{N}$; test data $(x,a,y)$
               \State \textbf{Output:} Prediction interval $\Gamma_{a'}(x|a)$ with coverage guarantee (\ref{p1})

               \State Select all the data point of the form $(x,a',y)$ from the data set $\mathcal{D}$ to construct the data set $\mathcal{D}_{a'} =\{(x_n,y_n)\}_{n=1}^{{N}_{a'}}$
    
               \State Partition data set $\mathcal{D}_{a'}$ into a training set $\mathcal{D}^{\text{tr}}$ and a calibration set $\mathcal{D}^{\text{cal}}$
               \State Train a prediction model on training set $\mathcal{D}^{\text{tr}}$ using (\ref{CQR}), obtaining the uncalibrated prediction interval (\ref{naive prediction interval})
               \State Compute the scores $\{s_n\}_{n=1}^{N^{\text{cal}}}$ in (\ref{NC score}) for all data points in $\mathcal{D}^{\text{cal}}$
               \State Evaluate correction term $\hat{q}_w$ as the $(1-\alpha)(N^{\text{cal}}+1)/N^{\text{cal}}$-quantile of the weighted empirical distribution in (\ref{weight distribution}) using (\ref{no_ca})-(\ref{weight})
               \State Obtain the final prediction set $\Gamma_{a'}(x|a)$ in (\ref{w_interval_calculate})
               
            \end{algorithmic}}
	\end{algorithm}
\end{table}

\section{Baselines and Performance Measures}\label{baseline}
 We now introduce the relevant baselines and performance metrics to evaluate the benefits and effectiveness of the proposed CCKE method.
\subsection{Baselines}
We will consider the following baselines:
\begin{itemize}

    \item \textbf{Counterfactual KPI Estimation} (CKE): CKE adopts the prediction intervals $\{\tilde{\Gamma}_{a'}^k(x)\}_{k\in[K]}$ in (\ref{naive prediction interval}) without the proposed correction based on calibration, i.e., running from step 1 only up to step 5 in Algorithm \ref{A1}.

    \item \textbf{Na\"ive CCKE} (NCCKE): NCCKE attempts to calibrate the prediction sets, but it neglects the problem of covariate shift described in Section \ref{WCP}. Accordingly, it produces the prediction sets (\ref{w_interval_calculate}) with the correction term $\hat{q}_w$ in (\ref{w_interval_calculate}) evaluated as the $(1-\alpha)(N^{\text{cal}}+1)/N^{\text{cal}}$-quantile of the empirical distribution
    \begin{equation}\label{NCCKE distribution}
    \sum_{n=1}^{N^{\text{cal}}}\frac{1}{N^{\text{cal}}+1}\delta_{s_n}+\frac{1}{N^{\text{cal}}+1}\delta_{\infty}.
    \end{equation}
    By disregarding the weights in (\ref{weight}), NCCKE does not satisfy the reliability guarantee (\ref{p1})\cite{weight-CP}.
\end{itemize}

\subsection{Performance Metrics}

For all the schemes under study, considering fixed pairs of apps $a$ and $a'$, we report their empirical coverage and empirical inefficiency evaluated on the test data set $\mathcal{D}^{\text{te}} = \{x_n,a_n=a,y_n\}_{n=1}^{N^{\text{te}}}$ of size $N^{\text{te}}=100$. Specifically, the empirical coverage measures the fraction of times in which all KPIs were covered by the respective prediction intervals, i.e.,
\begin{equation}\label{coverage}
    \text{empirical coverage} = \frac{1}{N^{\text{te}}}\sum_{n=1}^{N^{\text{te}}}\mathbbm\prod_{k=1}^K\mathbbm{1}\left(y_n^k\in \Gamma_{a'}^k(x_n|a)\right),
\end{equation}
The normalized empirical inefficiency is defined as
\begin{small}
\begin{equation}\label{size}
     \text{normalized empirical inefficiency} = \frac{1}{N^{\text{te}}}\sum_{n=1}^{N^{\text{te}}}\frac{1}{K}\sum_{k=1}^K\frac{|\Gamma_{a'}^k(x_n|a)|}{\epsilon_n}.
\end{equation}
\end{small}
By (\ref{size}), the normalized empirical inefficiency measures the average size of the prediction sets normalized by a problem-specific constant $\epsilon$ that will be specified for the two example applications in Section \ref{example2} and Section \ref{example1}.

We average the empirical coverage and empirical inefficiency over $200$ experiments, each corresponding to independent draws of the calibration and test data set pair $\{\mathcal{D}^{\text{cal}}, \mathcal{D}^{\text{te}}\}$. This way, the metric (\ref{coverage}) provides an estimate of the coverage probability in (\ref{p1}) and of the normalized size of prediction set $\mathbb{E}[1/K\sum_{k=1}^K\Gamma_{a'}^k(x|a)/{\epsilon}]$, respectively\cite{estimate}.

The code to reproduce all the results in the following sections is available at\\ \href{https://github.com/qiushuo0913/Inference\_DT\_code}{https://github.com/qiushuo0913/Inference\_DT\_code}.

\section{Application to Medium Access Control-Layer Scheduling}\label{example2}
In this section, we elaborate further on the scheduling setting described in Section \ref{example_section}, and demonstrate the operation of CCKE through simulation results.

\subsection{Setup}
\subsubsection{Setting}
As illustrated in Fig. \ref{fig:system_model}, we consider an uplink resource allocation task in an orthogonal frequency division multiplexing (OFDM)-based multi-access system with $K$ UEs. At the beginning of a scheduling frame, each $k$-th UE has a backlog of $b^{\mathrm{in}}_k$ packets in its own queue. Furthermore, each $k$-th UE has a CQI $c_k$ that describes the quality of its connection to the base station. Accordingly, context $x$ collects the backlogs of all the $K$ UEs at the beginning of the scheduling frame, $\{b^{\mathrm{in}}_k\}_{k=1}^K$, as well as the all CQIs for $K$ UEs, $\{c_k\}_{k=1}^K$, i.e., 
\begin{equation}\label{context}
    x=\{b^{\mathrm{in}}_k,c_k\}_{k=1}^K.
\end{equation}
App $a$ can be selected as one of two standard scheduling algorithms, namely RR and PFCA\cite{mobile}. Finally, KPIs $y$ represent the vector of backlogs remaining at the $K$ UEs at the end of the scheduling frame, i.e., $y = \{b^{\mathrm{fin}}_k\}_{k=1}^K$.

The system, including the generation of initial backlogs, is implemented using the 5G emulator provided by Nokia available at \cite{nokia}.

Following Section \ref{example_section}, the selection of the app $a\in\{\text{RR},\text{PFCA}\}$ at the controller 
is modeled via the conditional distribution
\begin{equation}\label{e_x}
    p(a=\text{RR}|x) = \frac{\exp(-{\hat{b}^{\mathrm{fin}}_x}/{T})}{1+\exp(-{\hat{b}^{\mathrm{fin}}_x}/{T})}
\end{equation}
with parameter $T$, where $\hat{b}^{\mathrm{fin}}_x$ is an estimate of the largest remaining backlog if scheduling app RR was chosen under context $x$, i.e.,
\begin{equation}
    \hat{b}^{\mathrm{fin}}_x = \mathop{\rm{max}}\limits_{k=1,\dots,K} \hat{b}^{\mathrm{fin}}_{k,x},
\end{equation}
where $\hat{b}^{\mathrm{fin}}_{k,x}$ is the estimate of the final backlog for UE $k$. The rationale for focusing on the probability of selecting RR in (\ref{e_x}) is that an estimate of the residual backlog can be obtained as
\begin{equation}\label{RR_remain}
    \hat{b}^{\mathrm{fin}}_{k,x} = b^{\mathrm{in}}_k-\frac{g(c_k)}{K},
\end{equation}
given the function $g(\cdot)$ that maps CQI $c_k$ to the expected transmission payload if the entire frame was allocated to user $k$. This can be obtained following Table 7.2.3-1 from TS 36.213 Rel-11\cite{3GPP}, and the division by $K$ in (\ref{RR_remain}) accounts for the fact that, with RR, each UE is allocated a fraction $1/K$ of the spectral resources. Note that (\ref{RR_remain}) is just an estimate of the true final backlog, as the true final backlog depends on the random evolution of the system.

As illustrated in Fig. \ref{fig:e_x}, the parameter $T \textgreater 0$ in (\ref{e_x}) makes it possible to control the dependence of the selected app on the context $x$. For $T\rightarrow 0$, the app selection becomes increasingly dependent on the context $x$. In particular, if $\hat{b}^{\mathrm{fin}}_x\leq 0$, and thus RR is expected to succeed in serving all the backlog, the controller chooses $a=\text{RR}$ with probability tending to $1$ as $T \rightarrow 0$. At the other extreme, for $T\rightarrow \infty$, the app selection mechanism at the controller becomes less dependent on the context $x$. In fact, as $T\rightarrow \infty$, the probability (\ref{e_x}) tends to $0.5$, and hence both apps are selected with the same probability.

\begin{figure}[htbp]
    \centering
    \includegraphics[width=6cm]{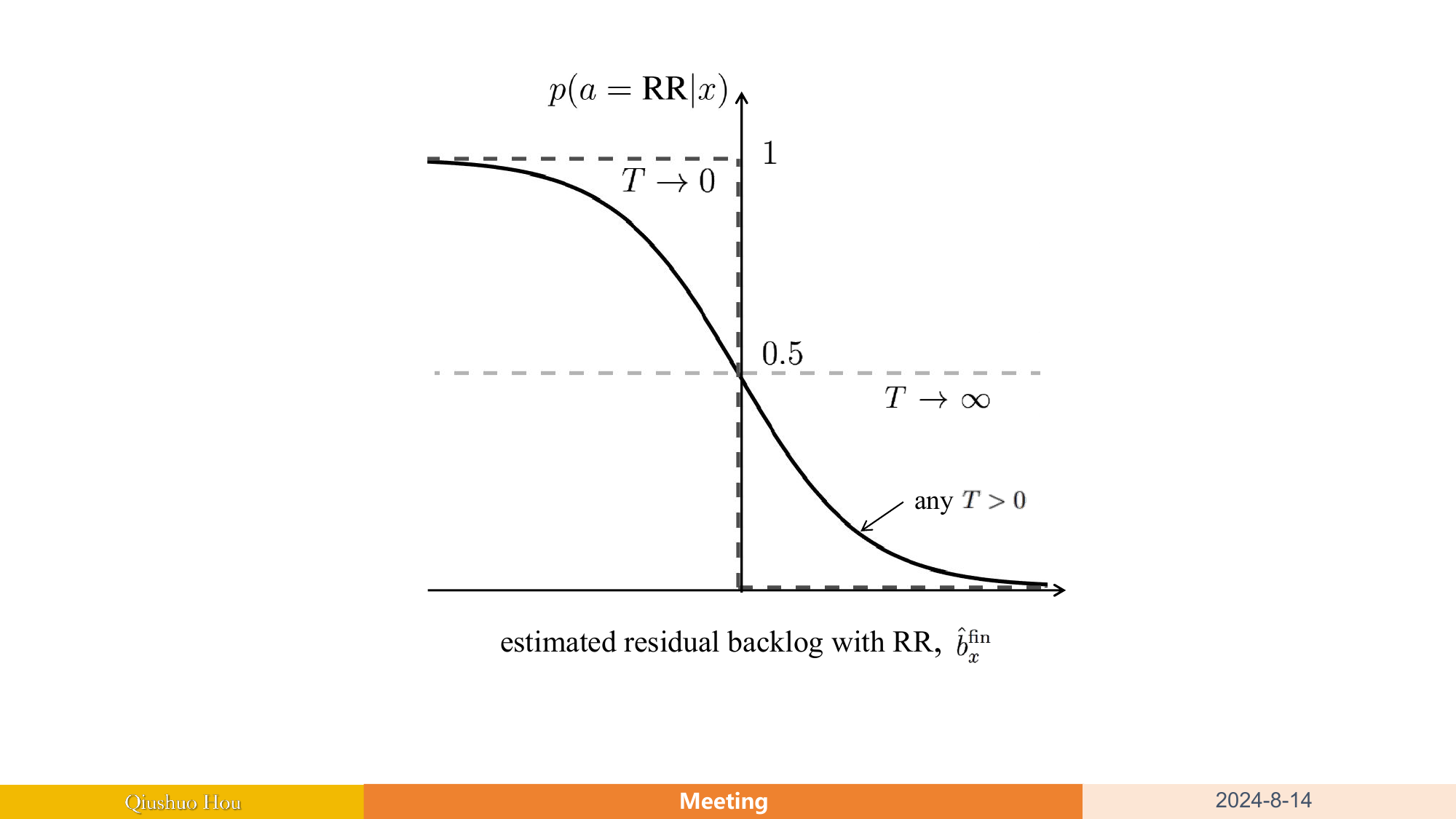}
    \vspace{-0.2cm}
    \caption{Illustration of the app selection probability (\ref{e_x}) applied by the non-real-time controller for the medium access control scheduling example.}
    \label{fig:e_x}
    \vspace{-0.2cm}
\end{figure}

\subsubsection{Prediction Model}
The quantile prediction model $\hat{q}_\tau(x|\phi)$ in (\ref{CQR}) must produce an estimate of the $K$ residual backlogs, $y = \{b_k^{\mathrm{fin}}\}_{k=1}^K$. To this end, we adopt a self-attention mechanism that ensures the important property of permutation equivariance with respect to the ordering of the $K$ UEs, i.e., $\hat{q}_\tau\left(\pi(x)|\phi\right) = \pi\left(\hat{q}_\tau(x|\phi)\right)$ for any permutation operator $\pi: (1,\dots,K) \rightarrow (1,\dots,K)$. This is detailed in the Appendix. The quantile regressor is trained using the $N^{\mathrm{tr}} = 3000$ training samples, while  $N^{\mathrm{cal}} = 50$ data pairs are used for calibration.

The parameter $\epsilon$ is set as the maximum initial backlog across the $K$ UEs, i.e., $\epsilon_n = \mathop{\rm{max}}\limits_{k} \{b_k^{\mathrm{in},n}\}$, and we refer to Appendix for the detailed parameters.

\subsection{Performance Analysis}
In this subsection, we evaluate CCKE coverage and inefficiency  in a scenario in which the actual app selected by the controller is the PFCA algorithm, i.e., $a=\mathrm{PFCA}$. The goal is thus predicting the final backlogs that would have been observed if the RR algorithm had been selected instead.

\begin{figure}[htp]
    \centering
    \subfigure[Coverage performance (\ref{coverage})]{
    \includegraphics[width=7.5cm]{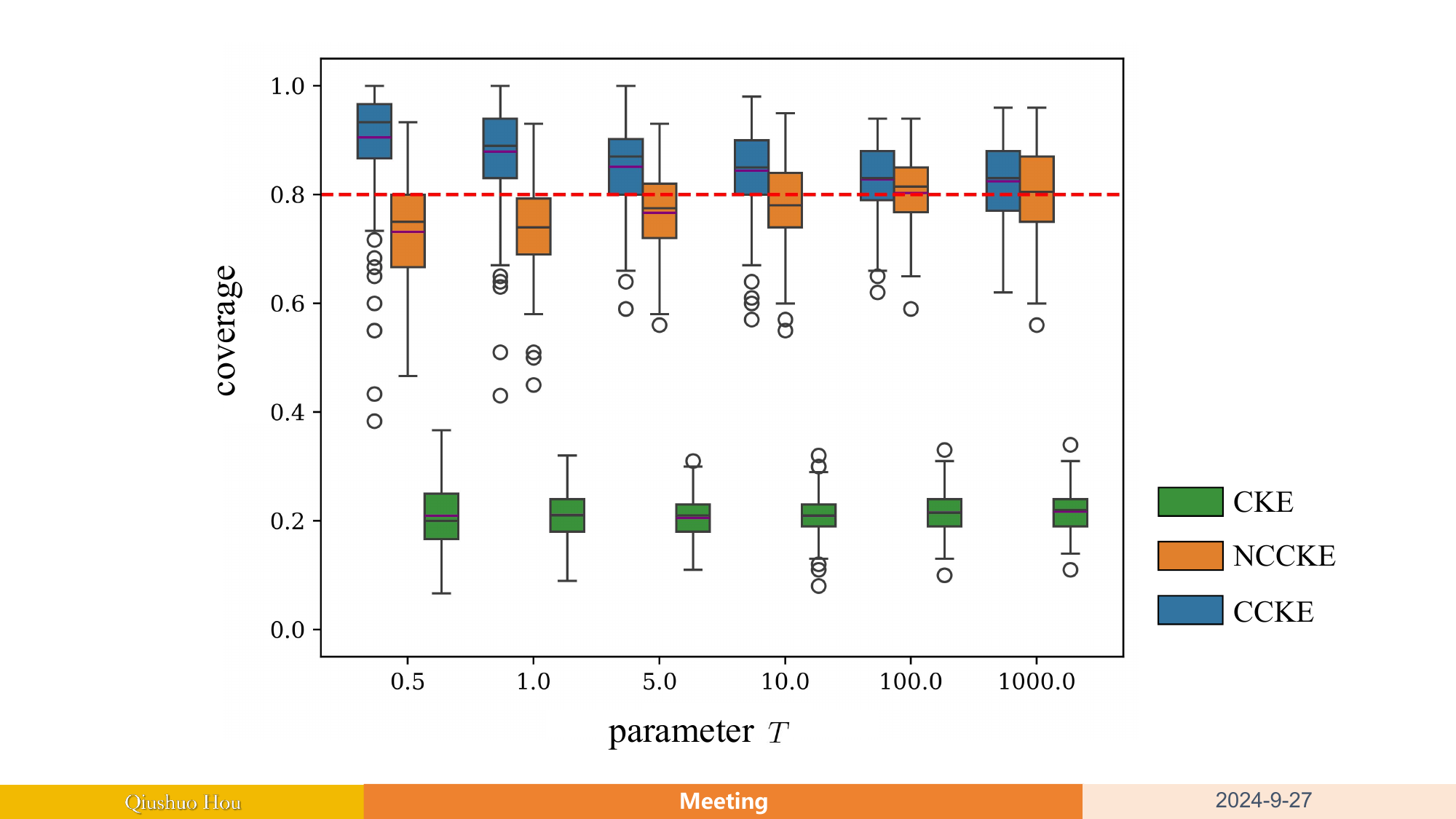}
    \vspace{-0.5cm}
    \label{fig:coverage_0}
    }
    \subfigure[Inefficiency performance (\ref{size})]
    {
    \centering
    \includegraphics[width=7.5cm]{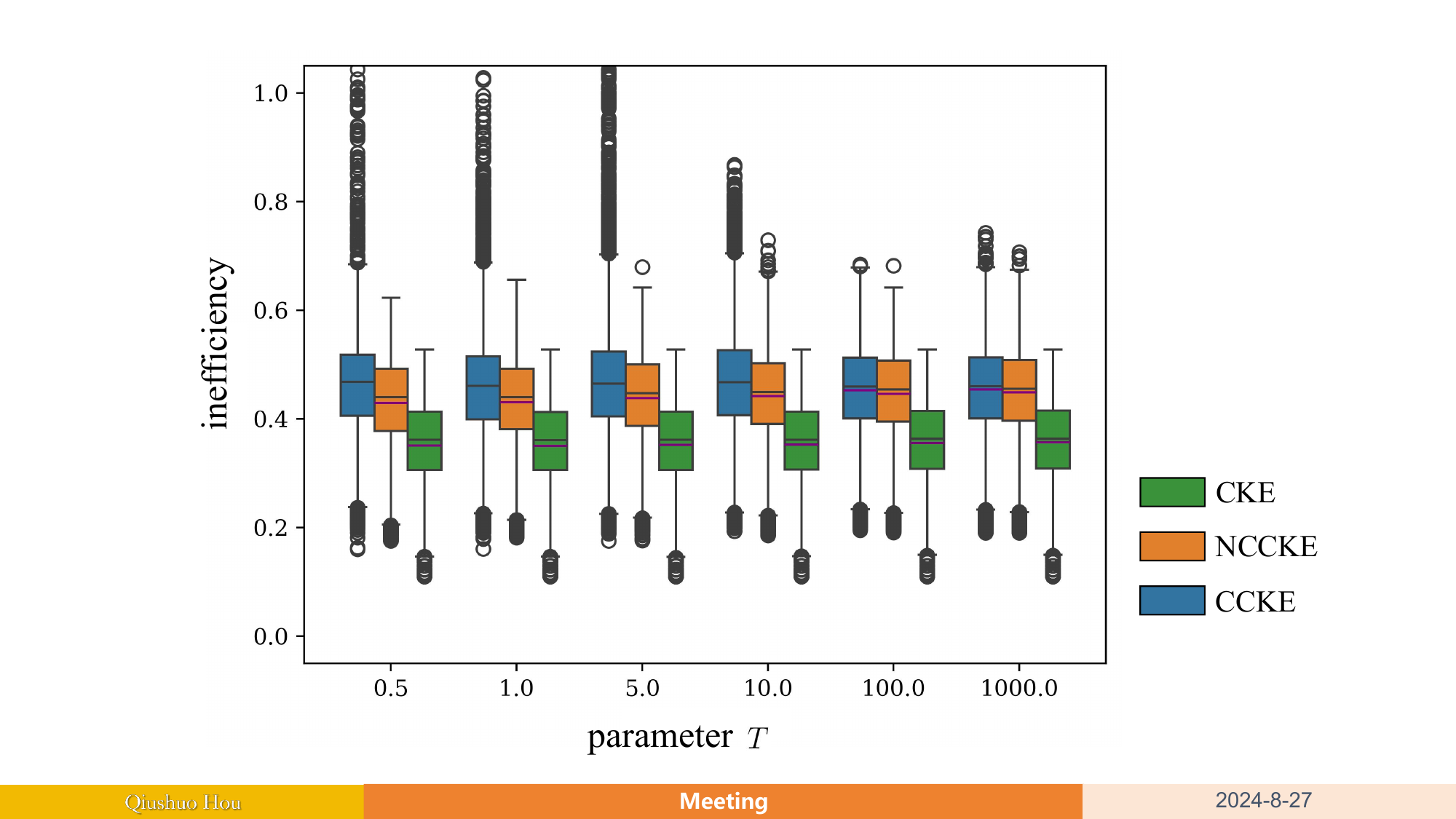}
    \label{fig:size_0}
    }
    \caption{Coverage and inefficiency performance for the counterfactual KPI analysis of the scheduling app RR given that the actual scheduling app is PFCA as a function of parameter $T$ in (\ref{e_x}). The target coverage level is $1-\alpha = 0.8$ (dashed line), and the results are averaged over $200$ experiments.}
    \label{fig: coverage_size}
    \vspace{-0.3cm}
\end{figure}

Fig. \ref{fig:coverage_0} and Fig. \ref{fig:size_0} present the coverage (\ref{coverage}) and the inefficiency (\ref{size}) for different values of the parameter $T$ in the app selection probability (\ref{e_x}), respectively. We adopt the standard box plot\cite{box plot}, so that the black line and the purple line in the box represent the median and the mean, respectively; the edges of the boxes represent the $0.25$-quantile (Q1) and $0.75$-quantile (Q3); the whiskers span $1.5$ times the interquartile range beyond Q1 and Q3; and the circles are data points that fall outside the interval covered by the whiskers. The target coverage probability is $1-\alpha = 0.8$ and it is shown as a dashed line in Fig. \ref{fig:coverage_0}.

Fig. \ref{fig:coverage_0} demonstrates that the proposed CCKE guarantees the coverage condition (\ref{p1}) for all values of $T$, thereby validating Lemma \ref{lemma}. In contrast, CKE and NCCKE generally fail to achieve the target coverage $1-\alpha = 0.8$. The exception to this rule is NCCKE with $T\geq 100$. In fact, increasing $T$ in the selection probability $p(a|x)$ makes the selected app increasingly independent of the context $x$, thus reducing the covariate shift problem discussed in Section \ref{WCP}. Thus, in this case, NCCKE coincides with CCKE, meeting the coverage requirement (\ref{p1}). To see this, note that the weights in (\ref{weight}) tend to $1$ as $T\rightarrow \infty$, and thus the empirical distribution produced by NCCKE in (\ref{NCCKE distribution}) equals the distribution (\ref{weight distribution}) adopted by CCKE.


As seen in Fig. \ref{fig:size_0}, CKE undercovers the KPIs, providing insufficiently large prediction sets. In contrast, CCKE ensures the coverage condition (\ref{p1}) by properly increasing the prediction set size (\ref{w_interval_calculate}). Finally, NCCKE tends to have the same prediction set sizes as CCKE as $T$ increases, meeting the coverage requirement (\ref{p1}).

\begin{figure}[htbp]
    \centering
    \includegraphics[width=8cm]{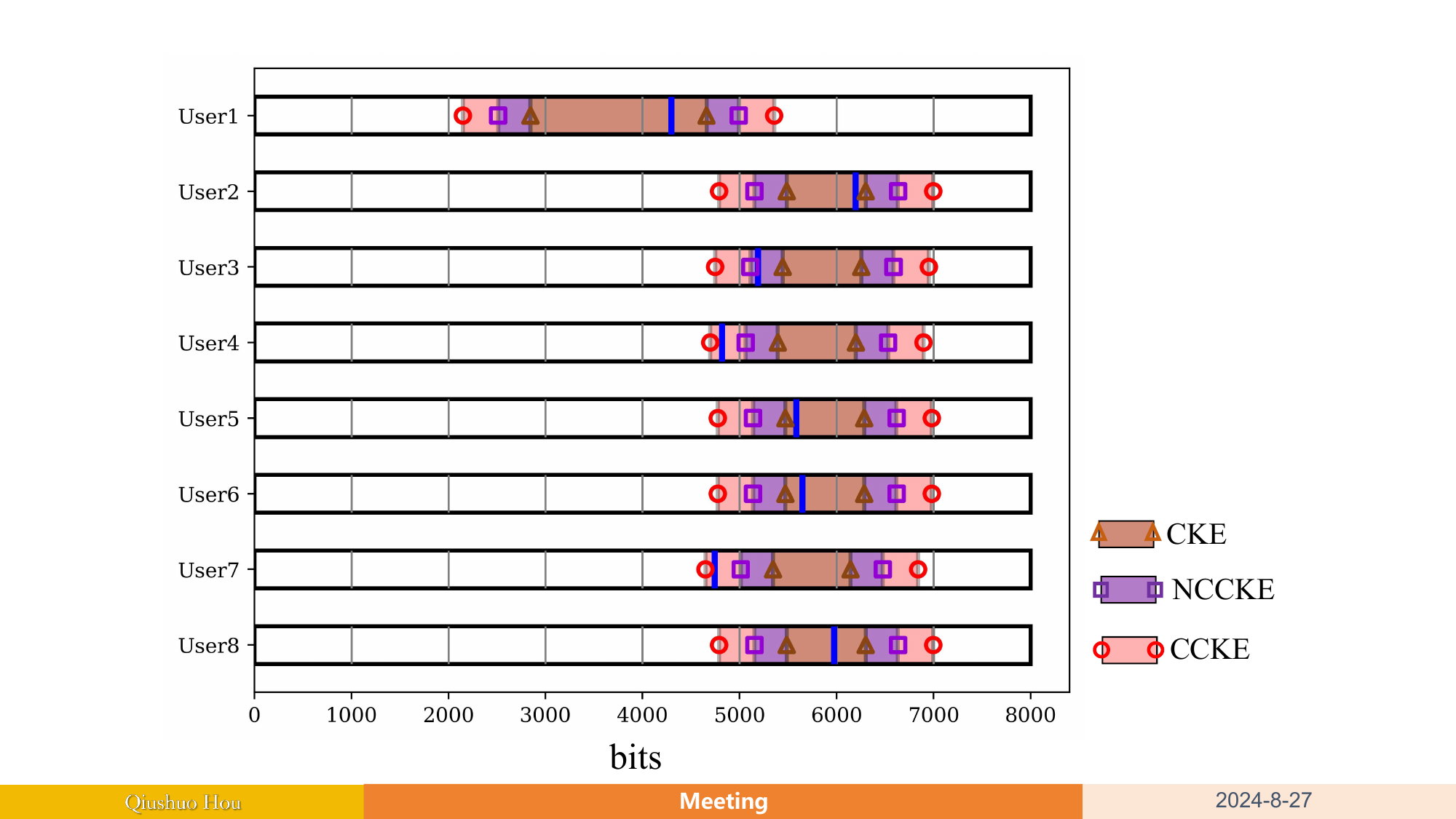}
    \vspace{-0.2cm}
    \caption{Prediction set for the remaining backlogs in the UEs' queues had app $a'=\mathrm{RR}$ been used when app $a=\mathrm{PFCA}$ was actually selected by the controller ($T=1$).}
    \label{fig:visual}
    \vspace{-0.2cm}
\end{figure}

To provide further insights into the performance of the schemes, Fig. \ref{fig:visual} visualizes the prediction intervals obtained by different methods for a specific realization of calibration and test data with $T=1$. The black rectangle represents the initial backlogs of each of the $K=8$ users, and the blue line represents the true remaining backlogs after running the target app $a'=\mathrm{RR}$. The prediction intervals are represented as shown in the legend. In line with the results in Fig. \ref{fig: coverage_size}, it is observed that the CKE and NCCKE output excessively wide intervals that cannot properly cover the true remaining backlogs for some of the users, while the CCKE can always cover the true remaining backlogs for all of the users.

To highlight the scalability of CCKE, Fig. \ref{fig: more_users} reports the coverage and inefficiency of different schemes with $K=8, 16,$ and $32$ users for $T=1$. As the number of users increases, the prediction model becomes less effective. As seen in Fig. \ref{fig:coverage_more_users}, this causes CKE to fail to satisfy the coverage requirements, while the proposed CCKE provides reliable prediction intervals regardless of the number of  $K$. As seen in Fig. \ref{fig:size_more_users}, this is done by increasing the size of the prediction interval to compensate for the performance degradation of the prediction model.

\begin{figure}[htp]
    \centering
    \vspace{-0.5cm}
    \subfigure[]{
    \includegraphics[width=7.5cm]{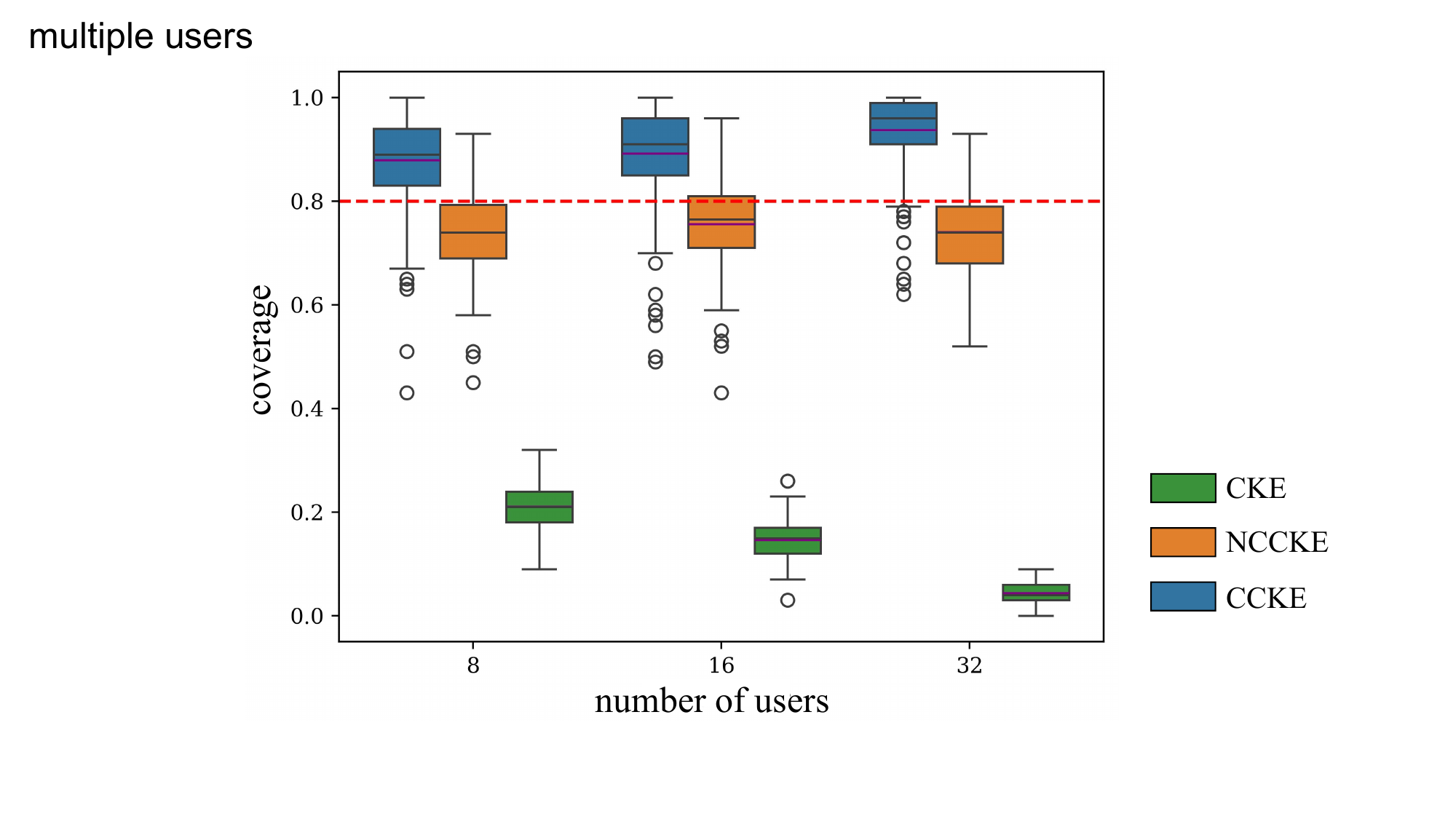}
    \vspace{-0.5cm}
    \label{fig:coverage_more_users}
    }
    \subfigure[]
    {
    \centering
    \includegraphics[width=7.5cm]{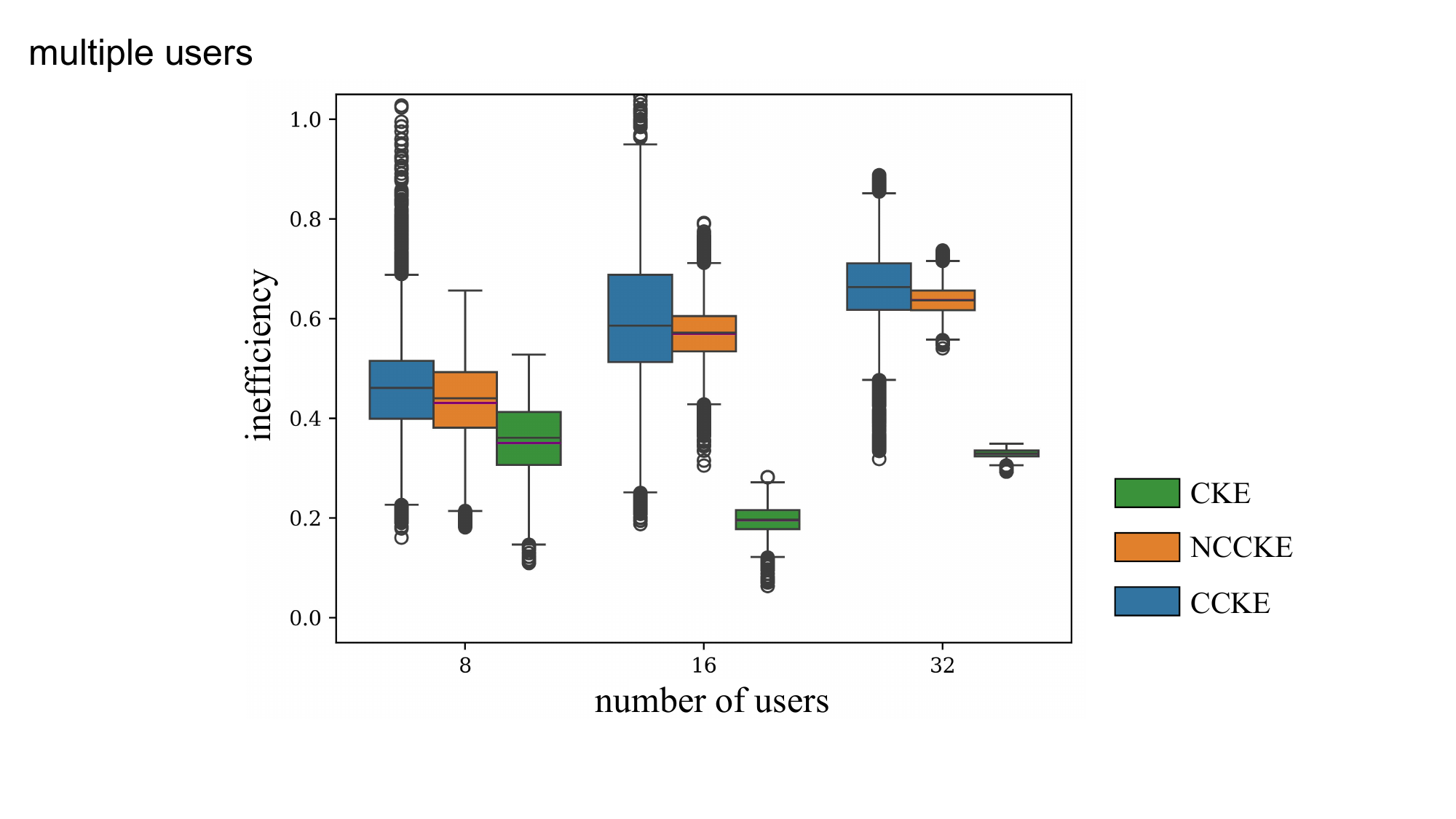}
    \label{fig:size_more_users}
    }
    \vspace{-0.2cm}
    \caption{Coverage and inefficiency performance for the counterfactual KPI analysis of the scheduling app RR given that the actual scheduling app is PFCA as a function of the number of users $K$ for $T=1$. The target coverage level is $1 - \alpha = 0.8$ (dashed line), and the results are averaged over $200$ experiments.}
    \label{fig: more_users}
\end{figure}

\section{Application to Physical-Layer Transmission}\label{example1}
In this section, we first provide further details on the multi-antenna transmission setting described in Section \ref{example_section}, which is then leveraged to demonstrate the use of CCKE in a wireless application at the physical layer.

\subsection{Setup}
Following Section \ref{example_section}, we consider a multi-antenna communication link in which the transmitter is equipped with $N_t=2$ antennas and the receiver has $N_r=2$ antennas. Prior to transmission, the transmitter estimates the average SNR, while also collecting information about the richness of the multipath channel. For example, using cameras, the transmitter may be able to estimate the number of scatters in the environment\cite{content}. 

We adopt a standard multipath channel model \cite{MIMO channel model}, according to which the $2\times 2$ channel matrix is given by
\begin{equation}\label{eq:H}
    H = \sqrt{\textrm{SNR}}\cdot\sum_{i=1}^{m}a_ie_r(\phi_{r,i})e_t(\phi_{t,i})^\dagger,
\end{equation}
where $a_i$ denotes the attenuation of path $i$; $m$ is the number of paths;  $\phi_{t,i}$ and $\phi_{r,i}$ denote the angles of  departure and arrival, respectively, for path $i$; and the steering vectors are given as 
\begin{equation}
   e_z(\phi) = \frac{1}{\sqrt{2}}
   \begin{bmatrix}
   1\\
   \exp(-j2\pi\Delta_z \cos(\phi)),
   \end{bmatrix}
\end{equation} 
where $z\in\{t,r\}$, and  $\Delta_z$ denotes the normalized separation between the transmit ($z=t$)  and receive ($z=r$) antennas. Accordingly, context $x$ includes the average SNR, denoted as  $\textrm{SNR}$, as well as the number of multipath components $m$, i.e., $x = (\textrm{SNR},m)$.

Based on the context information $x$, the transmitter chooses a transmission scheme $a$. A transmission scheme $a = (a_c,a_m)$ consists of the choice of a space-time coding method $a_c\in\mathcal{A}_c$ and of a constellation $a_m\in\mathcal{A}_m$. The set $\mathcal{A}_c$ may include different diversity-based space-time codes and multiplexing-based methods\cite{Alamouti code}, while the set $\mathcal{A}_m$ may include, e.g., BPSK and $M$-QAM for different values of integer $M$. Specifically, in this example, the set $\mathcal{A}_c$ encompasses Alamouti coding\cite{Alamouti code} and a basic multiplexing scheme transmitting a different symbol from each antenna, i.e., $\mathcal{A}_m = \{\mathrm{Alamouti}, \mathrm{multiplexing}\}$. Furthermore, the constellation set $\mathcal{A}_m$ includes BPSK and QPSK, i.e., $\mathcal{A}_m = \{\mathrm{BPSK},\mathrm{QPSK}\}$.

Finally, assuming a standard ARQ protocol, the KPI $y$ represents the retransmission latency measured in the number of transmission attempts. Denoting as $Y_{\rm{max}}$ the maximum allowed number of retransmissions, the latency KPI is limited in the range $1\leq y \leq Y_{\rm{max}}$.

The selection of the app $a$ at the controller is based on analytical approximations of the symbol error rate (SER) presented in \cite{SER on mul, SER on Ala}. Specifically, the conditional probability of choosing app $a$ given context $x$ is modeled as
\begin{equation}\label{e_x_example1}
    p(a|x) = \frac{\exp({1}/{\epsilon^a_xT})}{\sum_a\exp({1}/{\epsilon^a_xT})},
\end{equation}
where $\epsilon^a_x$ is an estimate of SER if transmission app $a$ was chosen under context $x$ and $T\textgreater 0$ is a temperature parameter playing the same role as in the previous example (see  Fig. \ref{fig:e_x}). 

The estimate $\epsilon^a_x$ is obtained by following the SER analysis on correlated Rayleigh channels in \cite{SER on mul, SER on Ala}, where the channel correlation properties are captured by the multipath parameter $m$ in (\ref{eq:H}), so that the off-diagonal elements of the channel covariance matrix equal $1/m$. This way, an increasing number of multipath components, $m$, causes a reduction in the spatial correlation\cite{correlation}. Specifically, we use the expressions in {\protect{\cite[Eq. 41]{SER on mul}}} and {\protect{\cite[Eq. 30]{SER on Ala}}}, respectively.

By (\ref{e_x_example1}), the selection of the higher modulation scheme QPSK and of the multiplexing transmission scheme becomes more likely as the corresponding SER estimates decrease, i.e., as the SNR and the number of paths increase. Furthermore, the parameter $T \textgreater 0$ in (\ref{e_x_example1}) makes it possible to control the dependence of the selected app on the context $x$. For $T\rightarrow 0$, the app selection becomes increasingly dependent on the context $x$. At the other extreme, for $T\rightarrow \infty$, the probability (\ref{e_x_example1}) tends to $0.25$, and hence all apps are selected with the same probability. The miscoverage level $\alpha$ is set as $0.2$, the maximum number of retransmissions is $Y_{\rm{max}} = 10$, and parameter $\epsilon$ in (\ref{size}) is set as $1$.

In the simulation, we generate $N^{\mathrm{tr}} = 3000$ data pairs for $\mathcal{D}^{\mathrm{tr}}$ and $N^{\mathrm{cal}} = 50$ data pairs for calibration data set $\mathcal{D}^{\mathrm{cal}}$. Specifically,  the $\textrm{SNR}$ for each context $x$ is randomly sampled from a truncated Gaussian distribution ranging from $-5$ dB to $15$ dB with the mean value as $5$ dB, and the multipath parameter $m$ in the channel model (\ref{eq:H}) is randomly and uniformly sampled from the integer interval $[1,10]$. The quantile regressor $\hat{q}_\tau(x|\phi)$ in (\ref{CQR}) is implemented via a $5$-layer feedforward neural network with input size $d_{\rm in}=2$, hidden sizes $d_{\rm hid}=[10, 10, 5]$, and output size $d_{\rm out}=2$.

\subsection{Performance Analysis}
In this subsection, we evaluate the coverage and inefficiency provided by CCKE in a scenario in which the actual app selected by the controller is multiplexing with a QPSK modulation, i.e., $a_c=\mathrm{multiplexing}$ and $a_m=\mathrm{QPSK}$. The goal is to predict the retransmission times that would have been observed if the Alamouti scheme with QPSK modulation had been selected instead.

\begin{figure}[htp]
    \centering
    \subfigure[Coverage performance (\ref{coverage})]{
    \includegraphics[width=7.6cm]{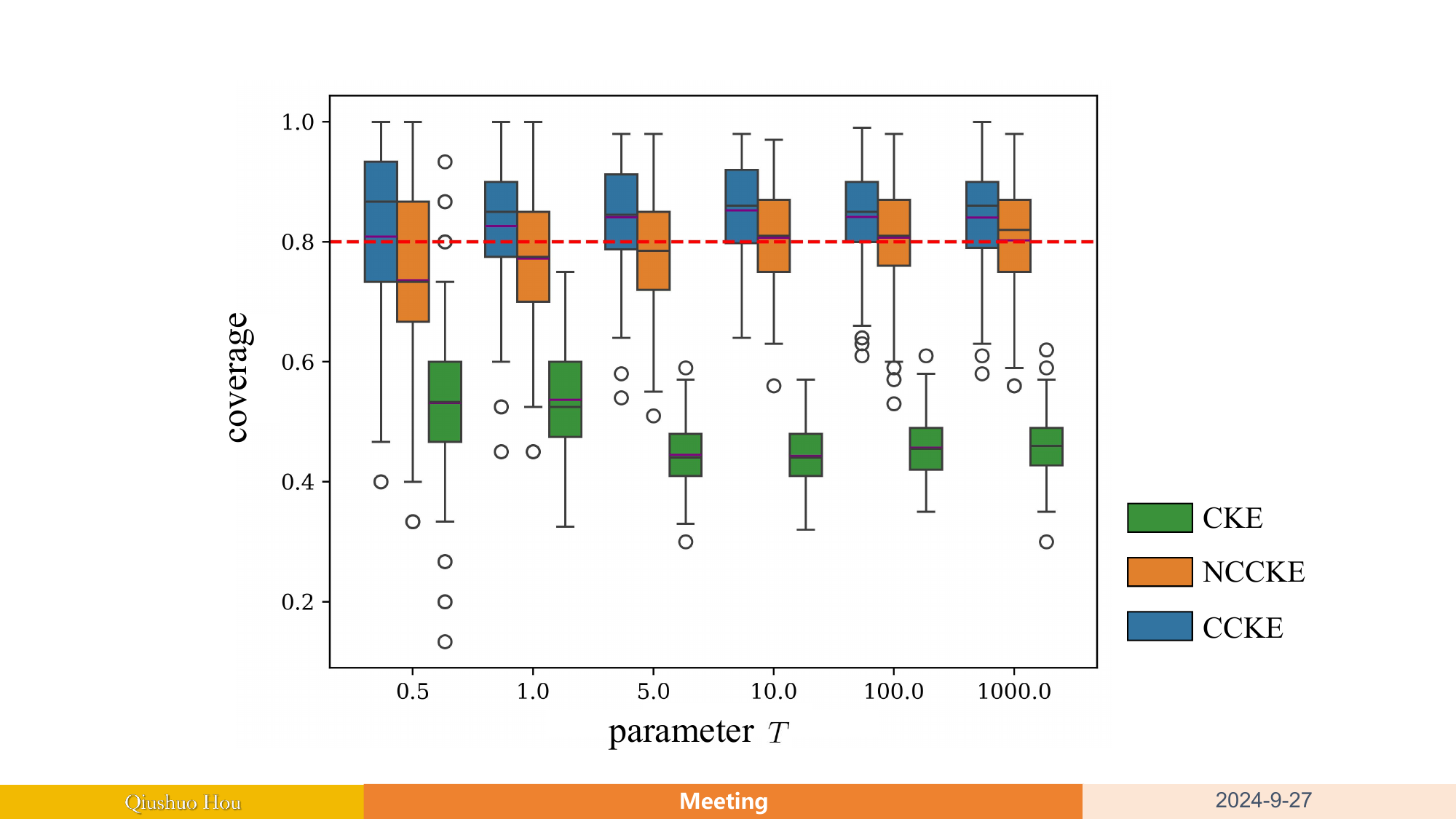}
    \vspace{-0.5cm}
    \label{fig:coverage_example1}
    }
    \subfigure[Inefficiency performance (\ref{size})]
    {
    \centering
    \includegraphics[width=7.5cm]{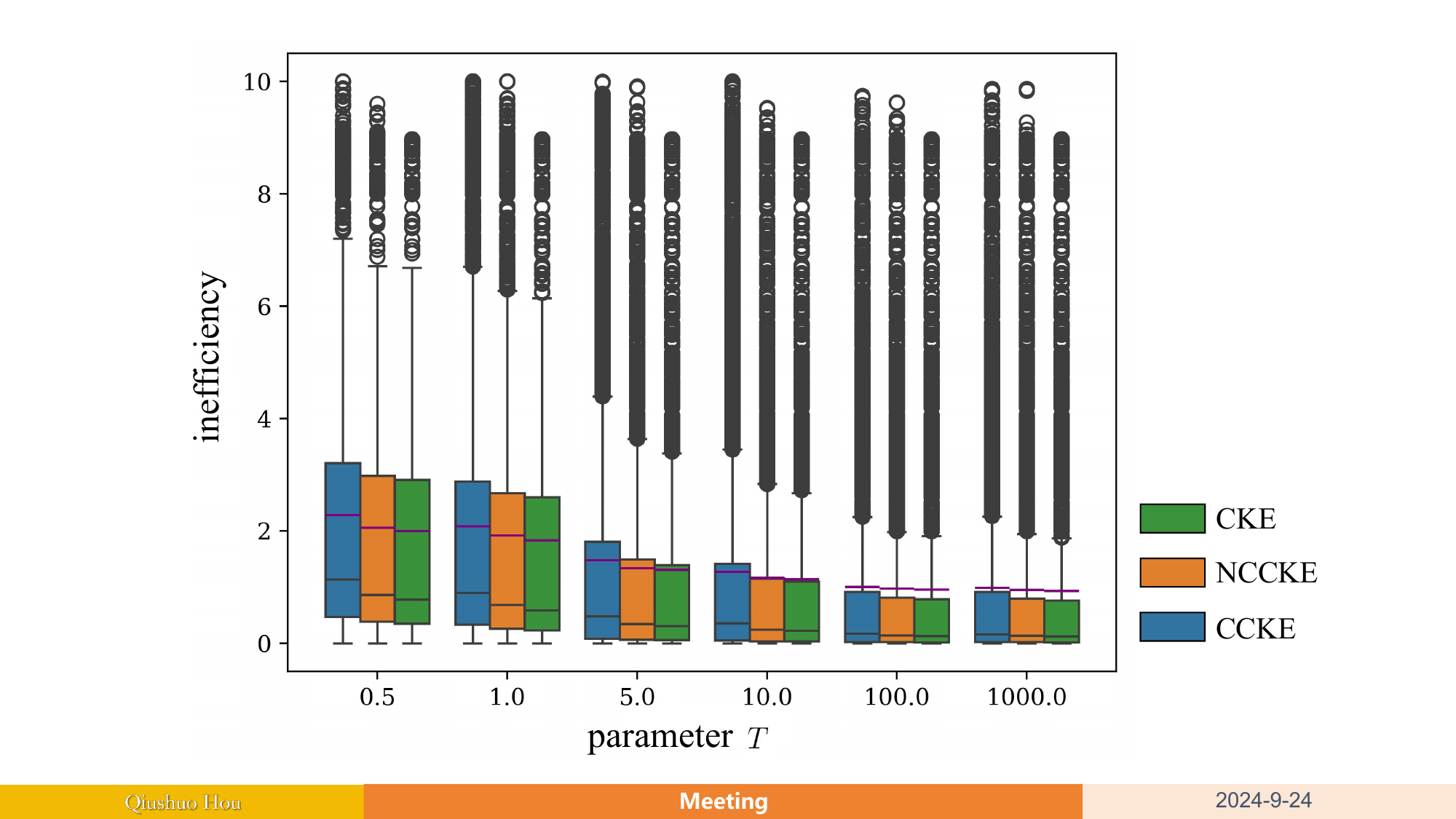}
    \label{fig:size_example1}
    }
    \caption{Coverage and inefficiency performance for the counterfactual KPI analysis of the transmission app based on Alamouti and QPSK given that the actual scheduling app uses multiplexing with QPSK as a function of parameter $T$ in (\ref{e_x_example1}). The target coverage level is $1-\alpha = 0.8$ (dashed line), and the results are averaged over $200$ experiments.}
    \label{fig: coverage_size_example1}
    \vspace{-0.3cm}
\end{figure}

As shown in Fig. \ref{fig:coverage_example1}, the proposed CCKE scheme always guarantees the coverage condition (\ref{p1}), while CKE and NCCKE generally fail to achieve the target coverage $1-\alpha = 0.8$ except for NCCKE with $T\geq 10$. With a growing parameter $T$ in the selection function (\ref{e_x_example1}), the selected app is increasingly independent of the context $x$. Thus, the covariate shift problem discussed in Section \ref{WCP} is gradually reduced, and NCCKE gradually coincides with CCKE, meeting the coverage requirement (\ref{p1}).

As seen in Fig. \ref{fig:size_example1}, CCKE ensures the coverage condition (\ref{p1}) by properly increasing the prediction set size. In this regard, NCCKE tends to have the same prediction set sizes as CCKE as $T$ increases, meeting the coverage requirement (\ref{p1}).

\section{Conclusions}\label{conclusion}
An important challenge in the design of next-generation wireless systems is to provide network operators with diagnostic, explainability, and optimization tools that can answer counterfactual queries about the KPIs that would have been obtained had a different choice been made by a controller. This work has made steps towards defining methods that can reliably address such ``what-if'' questions by leveraging the availability of an offline data set of logged tuples consisting of context, app identity, and KPIs. The proposed method, referred to as CCKE, leverages weighted conformal prediction to account for the covariate shift between logged data and test data, while producing provably valid ``error bars'' on the counterfactual estimates. We have showcased the operation of CCKE for two wireless applications operating at the medium access control layer and at the physical layer. 

Future work may focus on settings in which the app selection probability is only known implicitly through an algorithm, and thus it must be learned. Additionally, investigating the generalization of the proposed CCKE to distributed systems\cite{distributed CP1, distributed CP2}, possibly accounting also for temporal variability\cite{dynamic CP density ratio estimation1, dynamic CP density ratio estimation2, temporal CP 1, temporal CP 2}, are interesting directions. Furthermore, further research could incorporate digital twins to augment the logged data set for improved performance.


\section*{Appendix: Prediction Model for the Medium Access Control Problem} \label{App:MAC}

We treat the context $x$ in (\ref{context}) as a collection of $K$ two-dimensional tokens. This is formatted into a $2\times K$ matrix $x$, with each $k$-th column represents token $(b_k^{\mathrm{in}}, c_k)$. Let $W_Q\in\mathbb{R}^{d_h \times 2}$, $W_K\in\mathbb{R}^{d_h \times 2}$, $W_V\in\mathbb{R}^{d_o \times 2}$, $\hat{W}_Q\in\mathbb{R}^{d_h \times d_e}$, $\hat{W}\in\mathbb{R}^{d_h \times d_e}$, and $\hat{W}_V\in\mathbb{R}^{d_o \times d_e}$ be trainable parameters in the self-attention mechanism, where $d_h$, $d_o$, and $d_e$ are the hyperparameters that can be freely chosen. We first apply the self-attention mechanism following \cite{self-attention} as
\begin{equation}\label{MLP_input}
    x_{\mathrm{att}} = V\cdot\mathrm{softmax}\left(\frac{\mathcal{K}^TQ}{\sqrt{d_h}}\right),
\end{equation}
where $x_{\mathrm{att}}$ is a $d_o\times K$ matrix that contains the transformed tokens; $Q = W_Qx$, $\mathcal{K} = W_Kx$, $V = W_Vx$; and we have the row-wise softmax operation $\mathrm{softmax}([A]_{i,j}) = e^{[A]_{i,j}}/\sum_{j=1}^Je^{[A]_{i,j}}$ for $i=1,\dots, K$; $j=1,\dots, K$. Then, we apply a multi-layer feedforward neural network to each of the $K$ output tokens in (\ref{MLP_input}) to obtain the hidden embedding $x_e\in\mathbb{R}^{d_e\times K}$ as
\begin{equation}
    x_e = \mathrm{MLP}_1([x_{\mathrm{att}}]_k),
\end{equation}
where $[x_{\mathrm{att}}]_k$ is the $k$-th column of matrix $x_{\mathrm{att}}$ in (\ref{MLP_input}) for all $k=1,\dots, K$. By repeating the self-attention mechanism in (\ref{MLP_input}) for $x_e$ as
\begin{equation}\label{MLP_input_2}
    \hat{x}_{\mathrm{att}} = \hat{V}\cdot\mathrm{softmax}\left(\frac{\hat{\mathcal{K}}^T\hat{Q}}{\sqrt{d_h}}\right),
\end{equation}
where $\hat{Q} = \hat{W}_Qx_e$, $\hat{\mathcal{K}} = \hat{W}_Kx_e$, $\hat{V} = \hat{W}_Vx_e$, we input each of the $K$ output tokens in (\ref{MLP_input_2}) to another multi-layer feedforward neural network to obtain the estimated  $\alpha/2$- and ($1-\alpha/2$)-quantile of KPIs \{$b^{\mathrm{fin}}_k\}_{k=1}^K$ as
\begin{equation}
    \left(\hat{q}^k_{\alpha/2}, \hat{q}^k_{1-\alpha/2}\right) = {\mathrm{MLP}_2}([\hat{x}_{\mathrm{att}}]_k).
\end{equation}
Note that we apply the MLPs that are shared across all $K$ UEs in order to ensure permutation equivariance\cite{Deep Sets}. The simulation parameters of this prediction model are summarized in Table \ref{table3}.

\begin{table}[htpb]
    \renewcommand\arraystretch{1.2}
	\small
	\caption{Simulation Parameters}
	\label{table3}
	\centering
	\scalebox{1}{
	\begin{tabular}{c|c}
		\toprule
		\textbf{Parameter} & \textbf{Value}\\
		\hline\hline
        Dimension of attention $d_h$ & $10$\\
        \hline
        Dimension of output of attention $d_o$ & $10$\\
        \hline
        Dimension of hidden embedding $d_e$ & $10$\\
        \hline
        $\mathrm{MLP}_1$ & $\{10,10,10\}$\\
        \hline
        $\mathrm{MLP}_2$ & $\{10,10,2\}$\\
		\bottomrule
	\end{tabular}}
\end{table}

\end{document}